\numberwithin{equation}{section}
\theoremstyle{plain}
\newtheorem{thm}{Theorem}[section]
\newtheorem{lemma}[thm]{Lemma}
\newtheorem{prop}[thm]{Proposition}
\newtheorem{coroll}[thm]{Corollary}
\newtheorem{claim}[thm]{Claim}
\theoremstyle{definition}
\newtheorem{definition}[thm]{Definition}
\newtheorem{remark}[thm]{Remark}
\newcommand{\cl}[1]{\mbox{\ensuremath{\mathbf{#1}}}\xspace}
\newcommand{\NP}{\cl{NP}}
\def\rank{{\rm rank}}
\def\minfas{{\rm minfas}}
\def\Div{{\rm Div}}
\def\Chip{{\rm Chip}}
\def\deg{{\rm deg}}
\def\dist{{\rm dist}}
\def\ps@pprintTitle{%
  \let\@oddhead\@empty
  \let\@evenhead\@empty
  \def\@oddfoot{\reset@font\hfil\thepage\hfil}
  \let\@evenfoot\@oddfoot
}
\begin{document}

\title{Chip-firing games on Eulerian digraphs and {\NP}-hardness of computing the rank of a divisor on a graph}

\author[anal]{Viktor Kiss}
\ead{kivi@cs.elte.hu}

\author[cs]{Lilla T\'othm\'er\'esz}
\ead{tmlilla@cs.elte.hu}

\address[anal]{Department of Analysis, E\"otv\"os Lor\'and University, P\'azm\'any P\'eter s\'et\'any 1/C, Budapest H-1117, Hungary}

\address[cs]{Department of Computer Science, E\"otv\"os Lor\'and University, P\'azm\'any P\'eter s\'et\'any 1/C, Budapest H-1117, Hungary}

\begin{abstract}
Baker and Norine introduced a graph-theoretic analogue of the Riemann-Roch theory.
A central notion in this theory is the rank of a divisor.
In this paper we prove that computing the rank of a divisor on a  graph is \NP-hard, even for simple graphs.

The determination of the rank of a divisor can be translated to a question about a
chip-firing game on the same underlying graph.
We prove the \NP-hardness of this question by relating chip-firing on directed and undirected graphs.
\end{abstract}

\begin{keyword}
chip-firing game \sep Riemann-Roch theory \sep computational complexity \sep Eulerian graph
\MSC[2010] 05C57 \sep 05C45 \sep 14H55
\end{keyword}

\maketitle

\section{Introduction}

The Riemann-Roch theory for graphs was introduced by Baker and Norine in 2007 as the discrete analogue of the Riemann-Roch theory for Riemann surfaces \cite{BN-Riem-Roch}.
They defined the notions \emph{divisor, linear equivalence} and \emph{rank} also in this combinatorial setting, and showed that the analogue of basic theorems as for example the Riemann-Roch theorem, remains true.
Theorems like Baker's specialization lemma \cite{specializ_lemma} 
establish a connection between the rank of a divisor on a graph and on a curve, which enables a rich interaction of the discrete and continuous theories.

A central notion in the Riemann-Roch theory is the rank of a divisor. The question whether the rank can be computed in polynomial time has been posed in several papers \cite{hladky,Manjunath,pot_theory}, originally attributed to H.~Lenstra. 

Let us say a few words about previous work concerning the computation of the rank. Hladk\'y, Kr\'al' and Norine \cite{hladky} gave a finite algorithm for computing the rank of a divisor on a metric graph. Manjunath \cite{Manjunath} gave an algorithm for computing the rank of a divisor on a graph (possibly with multiple edges), that runs in polynomial time if the number of vertices of the graph is a constant. It can be decided in polynomial time, whether the rank of a divisor on a graph is at least $c$, where $c$ is a constant \cite{pot_theory}. Computing the rank of a divisor on a complete graph can be done in polynomial time \cite{cori}. For divisors of degree greater than $2g-2$ (where $g$ is the genus of the graph), the rank can be computed in polynomial time \cite{Manjunath}. On the other hand, there is a generalized model in which deciding whether the rank of a divisor is at least zero is already \NP-hard \cite{lattice}.

Our main goal in this paper is to show that
computing the rank of a divisor on a graph is \NP-hard, even for simple graphs. This result implies also the \NP-hardness of computing the rank of a divisor on a tropical curve by \cite[Theorem 1.6]{Luo}. We also show that deciding whether the rank of a divisor on a graph is at most $k$ is in \NP. 

Our method is the following: We translate the question of
computing the rank of a divisor to a question about the chip-firing
game of Bj\"orner, Lov\'asz and Shor
using the duality between these frameworks discovered by Baker
and Norine \cite{BN-Riem-Roch}. We get that the following question is computationally
equivalent to the determination of the rank: Given an initial
chip-distribution on an (undirected) graph $G$, what is the minimum number of
extra chips we need to put on this distribution to make the game non-terminating.

We first prove the \NP-hardness of computing the minimum number of chips that enables a non-terminating game on a simple Eulerian digraph by showing that it equals to the number of arcs in a minimum cardinality feedback arc set. This result is mentioned in a note added in proof of \cite{BL92}, where only the larger or equal part is proved. Recently, Perrot and Pham \cite{Perrot} solved an analogous question in the abelian sandpile model, which is a closely related variant of the chip-firing game. Our result follows by applying their method to the chip-firing game.

Then we show that the second question (concerning chip-firing games on directed graphs) can be reduced to the first one (concerning undirected graphs).
In order to do so, to any Eulerian digraph and initial chip-distribution, we assign an undirected graph with a chip-distribution
 such that in the short run, chip-firing on the undirected graph imitates chip-firing on the digraph.

\section{Preliminaries}

\subsection{Basic notations}

Throughout this paper, \emph{graph} means a connected undirected graph that can have multiple edges but no loops. A graph is \emph{simple} if it does not have multiple edges. A graph is usually denoted by $G$.
The vertex set and the edge set of a graph $G$ are denoted by $V(G)$ and $E(G)$, respectively.
The degree of a vertex $v$ is denoted by $d(v)$, the multiplicity of the edge $(u,v)$ by $d(u,v)$. The \emph{Laplacian matrix} of a graph $G$ means the following matrix $L$:
$$L(i,j) = \left\{\begin{array}{cl} -d(v_i) & \text{if } i=j \\
        d(v_i, v_j) & \text{if } i\neq j.
      \end{array} \right.$$

\emph{Digraph} means a (weakly) connected directed
graph that can have multiple edges but no loops.  
We usually denote a digraph by $D$. The vertex set and edge
set are denoted by $V(D)$ and $E(D)$, respectively.
For a vertex $v$ the indegree and the outdegree of $v$ are
denoted by $d^-(v)$ and $d^+(v)$, respectively.
A digraph $D$ is \emph{Eulerian} if $d^+(v) = d^-(v)$
for each vertex $v \in V(D)$.
The \emph{head} of the directed edge $(u, v)\in E(D)$
is $v$, and the \emph{tail} of the edge is $u$.
The multiplicity of the directed edge $(u,v)$ is
denoted by $\overrightarrow{d}(u, v)$.
A digraph is \emph{simple} if $\overrightarrow{d}(u, v) \le 1$ for each pair of different 
vertices $u, v \in V(D)$. 

The Laplacian matrix of a digraph $D$ means the
following matrix $L$:
$$L(i,j) = \left\{\begin{array}{cl} -d^+(v_i) & \text{if } i=j \\
        \overrightarrow{d}(v_j, v_i) & \text{if } i\neq j.
      \end{array} \right.$$

An important notion concerning digraphs is the feedback arc set. It also plays a crucial role in this paper.
\begin{definition}
A \emph{feedback arc set} of a digraph $D$ is a set of edges
$F \subseteq E(D)$ such that the digraph $D'=(V(D), E(D) \setminus F)$
is acyclic. We denote
$$
  \minfas(D) = \min\{|F| : F \subseteq E(D) \text{ is a feedback
  arc set}\}.
$$
\end{definition}

Let $G$ be a graph. An \emph{orientation} of $G$ is a directed
graph $D$ obtained from $G$ by directing each edge.
We identify the vertices of $G$ with the corresponding vertices
of $D$. We denote the
indegree and the outdegree of a vertex $v \in V(G)$ in the
orientation $D$ by $d^-_D(v)$ and $d^+_D(v)$,
respectively.

For a graph $G$ let us denote by $\mathbf{0}_G$
the vector with each
coordinate equal to $0$, and by $\mathbf{1}_G$
the vector with each
coordinate equal to $1$,
where the coordinates are
indexed by the vertices of $G$. For a vertex $v$ of $G$ we
denote the characteristic vector of $v$ by $\mathbf{1}_v$.
We use the same notations for digraphs.

\subsection{Riemann-Roch theory on graphs}

In this section we give some basic definitions of the Riemann-Roch theory on graphs. 
The basic objects are called \emph{divisors}.
For a graph $G$, $\Div(G)$ is the free abelian group on the set of vertices of $G$. An element $f\in \Div(G)$ is called \emph{divisor}. We either think of a divisor $f\in \Div(G)$ as a function $f:V(G)\to \mathbb{Z}$, or as a vector $f \in \mathbb{Z}^{|V(G)|}$, where the coordinates are indexed by the vertices of the graph.

The \emph{degree} of a divisor is the following: $$\deg(f)=\sum_{v\in V(G)} f(v).$$

The following equivalence relation on $\Div(G)$ is called \emph{linear equivalence}:
For $f, g \in \Div(G)$, $f\sim g$ if there exists a $z\in \mathbb{Z}^{|V(G)|}$ such that $g = f + Lz$.

A divisor $f\in\Div(G)$ is \emph{effective}, if $f(v)\geq 0$
for each $v\in V(G)$.

\begin{definition}[The rank of a divisor]
  For a divisor $f\in \Div(G)$, the \emph{rank} of $f$ is
  \begin{equation*}
  \begin{split}
    \rank(f) = \min\{\deg(g) - 1 : \text{$g \in \Div(G)$, $g$ is effective, } \\
    \text{$\nexists h \in \Div(G)$ such that $h \sim f-g$ and $h$ is effective}\}.
  \end{split}
  \end{equation*}
  When we wish to emphasize the underlying graph, we write $\rank_G(f)$ instead of $\rank(f)$.
\end{definition}

\subsection{Chip-firing}

  It was noted already by Baker and Norine \cite{BN-Riem-Roch}, that there is a duality between divisors on graphs and the objects of the chip-firing game, as defined by Bj\"orner, Lov\'asz and Shor \cite{BLS91}. Using this duality we can translate some questions in divisor theory to questions in chip-firing. We would like to use the latter language in the article, so let us include here a short introduction to chip-firing.

\begin{remark}
  Often the term ``chip-firing game'' is used also in the setting of Baker and Norine, but for clarity, we only use this term for the game of \cite{BLS91}.
\end{remark}

  The theory of chip-firing games was developed both for
  graphs (see \cite{BLS91}) and digraphs (see \cite{BL92}).
  Although the notions of the Riemann-Roch theory on graphs are in duality with
  notions concerning the chip-firing game on undirected graphs, later on,
  we also need chip-firing on digraphs.

  The basic idea is that on each vertex of a graph (or digraph),
  there is a certain amount of chips. If a vertex has at least as many chips as its degree
  (in the directed case: as its outdegree), then it can be fired.
  In the undirected case, this means, that the vertex passes
  a chip to its neighbors along each edge incident to it, and so the number of chips on itself decreases by its degree.
  In the directed case, the fired vertex passes a chip along each outgoing edge.
  In fact, if we think of an undirected graph as a special digraph where we replace
  each edge by a pair of oppositely directed edges, then the two definitions coincide for these graphs.

  Now we give the exact definitions.

  Let $H$ be an undirected graph or a digraph.
  A \emph{chip-distribution}, or \emph{distribution}, is a function
  $x:V(H) \to \mathbb{Z}^+\cup \{0\}$. We sometimes say: vertex
  $v$ has $x(v)$ chips. We use the notation $|x|$ for the number of
  chips in the distribution $x$, i.e., $|x| = \sum_{v \in V(H)} x(v)$.
  We denote the set of chip-distributions on $H$ by $\Chip(H)$.

  \emph{Firing} a vertex $v$ means taking the
  new chip-distribution $x + L\mathbf{1}_v$ instead of $x$.
  Note that the Laplacian matrix $L$ is different in the
  undirected and in the directed case, and that in both cases $|x+L\mathbf{1}_v|=|x|$
  so a firing preserves the number of chips.
  A vertex $v \in V(H)$ is \emph{active}
  (with respect to $x$) if after firing it, it still has
  a nonnegative number of chips, i.e., in the undirected case
  if $x(v)\geq d(v)$, while in the directed case if
  $x(v) \ge d^+(v)$.
  The firing of a vertex $v \in V(H)$ is \emph{legal},
  if $v$ was active before the firing.
  A \emph{legal game} is a sequence of distributions in which
  every distribution is obtained from the previous one by a legal
  firing.
  A game terminates if there is no active vertex with respect to the
  last distribution.

  The following theorem was proved by Bj\"orner, Lov\'asz and Shor
  for undirected graphs and by Bj\"orner and Lov\'asz for digraphs.
  (Originally the theorem for the undirected case was proved for
  simple graphs, but the proof works also for graphs with multiple edges.)

  \begin{thm}[{\cite[Theorem 1.1]{BLS91}}, {\cite[Theorem 1.1]{BL92}}] \label{thm::vegesseg_kommutativ}
    Let $H$ be a graph or a digraph and let $x \in \Chip(H)$
    be a distribution. Then starting from $x$, either every legal game can be continued indefinitely, or every legal game terminates after the same number of moves with the same final distribution. Moreover, the number of times a given node is fired is the same in every legal game.
  \end{thm}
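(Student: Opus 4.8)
The plan is to reduce everything to one monotonicity principle: since every off-diagonal entry of the Laplacian is non-negative (it records the multiplicity of an edge, and firing a vertex $u$ can only send chips toward another vertex $v$), \emph{firing a vertex other than $v$ never decreases the number of chips on $v$}. In both the undirected and the directed case the activity of $v$ is governed by a single threshold ($x(v)\ge d(v)$, resp. $x(v)\ge d^+(v)$), so this observation already yields a persistence property: if $v$ is active at some distribution and one plays a legal game that never fires $v$, then $v$ stays active throughout. I would treat the two cases uniformly, writing $L$ for the appropriate Laplacian and $\delta(v)$ for $d(v)$ or $d^+(v)$, since nothing beyond non-negativity of the off-diagonal entries of $L$ is used.

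The heart of the argument is a domination lemma. Encode a legal game by its \emph{firing vector} $\rho$, where $\rho(v)$ counts how many times $v$ has been fired; after these firings the distribution is exactly $x+L\rho$. Suppose some legal game $G_1$ from $x$ terminates, with firing vector $\sigma$. I claim that the firing vector $\tau$ reached at \emph{any} stage of \emph{any} legal game $G_2$ from $x$ satisfies $\tau\le\sigma$ coordinatewise. Suppose not, and look at the first firing in $G_2$ that breaks this: a vertex $v$ is fired at a moment when its current firing vector $\tau'$ satisfies $\tau'(v)=\sigma(v)$ and $\tau'(u)\le\sigma(u)$ for all $u$. Legality of this firing says $v$ is active in $G_2$, i.e.\ $(x+L\tau')(v)\ge\delta(v)$. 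On the other hand, because $G_1$ terminated, $v$ is \emph{inactive} at its end, i.e.\ $(x+L\sigma)(v)<\delta(v)$. Comparing the two, and using $\tau'(v)=\sigma(v)$ together with $\tau'(u)\le\sigma(u)$ and the non-negativity of the off-diagonal entries of $L$, one gets
\[
  (x+L\tau')(v)\;\le\;(x+L\sigma)(v)\;<\;\delta(v),
\]
contradicting activity. Hence $\tau\le\sigma$ always.

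From the domination lemma the theorem follows quickly. If one legal game terminates with firing vector $\sigma$, then every legal game has firing vector bounded by $\sigma$, so its total length is at most $|\sigma|$; in particular no legal game can be continued indefinitely, so \emph{every} legal game terminates. Applying the lemma to two terminating games in both directions gives equal firing vectors; since the final distribution is $x+L\sigma$ and the number of moves is $|\sigma|$, both are determined by the firing vector, proving the ``same final distribution'' and ``same number of moves'' assertions, as well as the last sentence about per-vertex firing counts. In the remaining case no legal game terminates, which means the last distribution of every legal game still has an active vertex, so every legal game can be extended and thus continued indefinitely.

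I expect the domination lemma to be the only real obstacle: one must pick the first coordinatewise violation correctly and check that the single displayed inequality chain goes through. Everything else---the case split, the passage from bounded firing vectors to termination, and the uniform treatment of graphs and digraphs---is then bookkeeping.
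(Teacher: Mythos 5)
Your proof is correct. One thing to be aware of: the paper does not prove Theorem~\ref{thm::vegesseg_kommutativ} at all --- it is imported as a black box from \cite{BLS91} and \cite{BL92} --- so there is no in-paper argument to compare against; your write-up is a genuine standalone proof. What you call the domination lemma is the ``least action principle'': every legal game's firing vector is coordinatewise dominated by the firing vector of any terminating legal game, and your derivation of it is sound --- the first-violation choice forces $\tau'(v)=\sigma(v)$ and $\tau'\le\sigma$, and then only the non-negativity of the off-diagonal Laplacian entries and the single activity threshold $\delta(v)=-L(v,v)$ are needed for the inequality $(x+L\tau')(v)\le(x+L\sigma)(v)<\delta(v)$, which indeed works uniformly for graphs and Eulerian or non-Eulerian digraphs alike. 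This is a different (and by now standard) route from the original proofs of Bj\"orner--Lov\'asz--Shor and Bj\"orner--Lov\'asz, which argue via exchange/commutation properties of the language of legal firing sequences (a confluence argument in the spirit of Newman's lemma); your approach buys all three conclusions at once --- termination of all games once one terminates, equality of per-vertex firing counts by applying domination in both directions, and hence equality of lengths and of final distributions via $x+L\sigma$ --- without any local-confluence bookkeeping. The only presentational nitpick is that ``every legal game terminates'' should be read as ``every maximal legal game,'' i.e., any legal play, extended as long as an active vertex exists, halts; your length bound $|\tau|\le|\sigma|$ gives exactly this.
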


  Let us call a chip-distribution $x\in \Chip(H)$ \emph{terminating}, if every legal chip-firing game played starting from $x$ terminates, and call it \emph{non-terminating}, if every legal chip-firing game played starting from $x$ can be continued indefinitely. According to Theorem \ref{thm::vegesseg_kommutativ}, a chip-distribution is either terminating or non-terminating.

  It can easily be seen by the pigeonhole principle that
  if for a graph $G$ a distribution $x \in \Chip(G)$ has $|x| > 2|E(G)| - |V(G)|$
  then $x$ is non-terminating (see \cite{BLS91}). And similarly,
  for a digraph $D$ if a distribution $x \in \Chip(D)$
  has $|x| > |E(D)| - |V(D)|$ then $x$ is non-terminating (see \cite{BL92}).

  From this it follows that the following quantity, which measures how far a
  given distribution is from being non-terminating, is well defined.
  For a distribution $x \in \Chip(H)$, let
  $$
    \dist(x) = \min\{|y| : y \in \Chip(H), x + y \text{ is non-terminating}\}.
  $$
  We say that $\dist(x)$ is the
  \emph{distance} of $x$ from non-terminating distributions.

  Note that $\dist(\mathbf{0}_H)$ is exactly the minimum number of chips in a non-terminating distribution on the graph/digraph $H$.

\subsection{Chip-firing and the Riemann-Roch theory}
  Now we describe the duality between divisors on graphs and chip-distri\-bu\-ti\-ons discovered by Baker and Norine \cite{BN-Riem-Roch}.

  Let $G$ be a graph and let $K^+ = K^+_G$ be the chip-distribution with
  $K^+(v)=d(v) - 1$ for each $v \in V(G)$.

  For a divisor $f\in \Div(G)$ with $f(v)\leq d(v)-1$ for each $v\in V(G)$, we have $(K^+ - f)(v)\geq 0$ for each $v\in V(G)$, therefore $K^+ - f\in \Chip(G)$.
  We call $K^+ - f$ the \emph{dual pair} of $f$. Note that each chip-distribution is a dual pair of some divisor.

  \begin{prop}[{\cite[Corollary 5.4]{BN-Riem-Roch}}]
    \label{prop::dual_pair}
    For a divisor $f\in \Div(G)$ with $f(v)\leq d(v)-1$ for each $v\in V(G)$, there exists an effective divisor equivalent to $f$ if and only if $K^+ - f$ is a terminating distribution.
  \end{prop}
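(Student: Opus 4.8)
The plan is to exploit the dictionary set up just above: firing a vertex $v$ replaces the distribution $x$ by $x+L\mathbf{1}_v$, so on the dual side it replaces the divisor $f=K^+-x$ by $f-L\mathbf{1}_v$, which stays in the linear equivalence class of $f$; moreover a distribution satisfies $x\ge\mathbf{0}_G$ exactly when its dual divisor satisfies $f\le K^+$ coordinatewise, i.e. $f(v)\le d(v)-1$ for all $v$. With this translation the two implications become quite different in character, and I would treat them separately.

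For the direction ``$K^+-f$ terminating $\Rightarrow$ effective divisor exists'' I would simply read off the terminal position. By Theorem~\ref{thm::vegesseg_kommutativ}, since $x=K^+-f$ is terminating, some legal game ends at a stable distribution $x^*$, i.e. $x^*(v)\le d(v)-1$ for every $v$, and of course $x^*\ge\mathbf{0}_G$. Writing $x^*=x+L\sigma$ for the nonnegative firing vector $\sigma$, the dual divisor $g:=K^+-x^*=f-L\sigma$ is linearly equivalent to $f$, and the stability bound $x^*(v)\le d(v)-1$ says precisely that $g(v)=d(v)-1-x^*(v)\ge 0$. Thus $g$ is an effective divisor equivalent to $f$.

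The reverse direction is the substantive one, and here the main tool I would isolate is a \emph{least action principle}: if $\tau\colon V(G)\to\mathbb{Z}^+\cup\{0\}$ is any nonnegative vector for which $x+L\tau$ is stable (no vertex active, $(x+L\tau)(v)\le d(v)-1$ for all $v$), then every legal game from $x$ terminates, with firing counts bounded coordinatewise by $\tau$. I would prove this by induction on the length of a legal game, maintaining that the running firing count $\sigma'$ satisfies $\sigma'\le\tau$: if a vertex $v$ is legally fired then $\sigma'(v)<\tau(v)$, for otherwise $\sigma'(v)=\tau(v)$ forces, via $(L(\tau-\sigma'))(v)=\sum_{u\ne v} d(v,u)(\tau-\sigma')(u)\ge 0$, the inequality $(x+L\tau)(v)\ge(x+L\sigma')(v)\ge d(v)$, contradicting stability of $x+L\tau$. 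The key point, and the reason this direction is delicate, is that the argument never uses $x+L\tau\ge\mathbf{0}_G$; only the ceiling $\le d(v)-1$ and the monotonicity $\sigma'\le\tau$ enter, which is exactly what lets it apply to a target that need not itself be a legal chip-distribution.

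Given the least action principle, the implication ``effective divisor exists $\Rightarrow$ $K^+-f$ terminating'' is a short computation. Starting from an effective $g\sim f$, I would write $g=f+Lw$ and normalize $w\ge\mathbf{0}_G$ by subtracting $(\min_v w(v))\mathbf{1}_G$, which is harmless since $L\mathbf{1}_G=\mathbf{0}_G$. Choosing $M\ge\max_v w(v)$ and setting $\tau:=M\mathbf{1}_G-w\ge\mathbf{0}_G$, the identity $L\mathbf{1}_G=\mathbf{0}_G$ gives $x+L\tau=x-Lw=(K^+-f)-(g-f)=K^+-g$, which is $\le K^+$, hence stable, because $g\ge\mathbf{0}_G$. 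The least action principle then forces the game from $x=K^+-f$ to terminate, completing the equivalence. The one step I would double-check carefully is the sign bookkeeping in $x+L\tau=K^+-g$, together with the fact that normalizing $w$ to be nonnegative does not disturb the linear equivalence class.
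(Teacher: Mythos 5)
Your proof is correct, but there is nothing in the paper to compare it against: Proposition \ref{prop::dual_pair} is quoted from Baker and Norine \cite[Corollary 5.4]{BN-Riem-Roch} and the paper supplies no proof of its own. Judged on its merits, your argument is sound in both directions. The easy direction (reading off the effective divisor $g = K^+ - x^*$ from the stable terminal distribution, which exists by the definition of terminating and is legitimately combined with the firing-count vector $\sigma \ge \mathbf{0}_G$ to give $g = f - L\sigma \sim f$, $g \ge \mathbf{0}_G$) is the standard one. The substantive direction rests on your least action principle, and the induction you sketch is complete: if $\sigma' \le \tau$ and $\sigma'(v) = \tau(v)$ for a vertex $v$ about to be legally fired, then $(L(\tau-\sigma'))(v) = \sum_{u \ne v} d(v,u)(\tau-\sigma')(u) \ge 0$ forces $(x+L\tau)(v) \ge (x+L\sigma')(v) \ge d(v)$, contradicting the ceiling, so firing counts stay below $\tau$ and no legal game can be infinite. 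Your observation that this lemma never needs $x + L\tau \ge \mathbf{0}_G$ is precisely the point that makes it applicable: the target $K^+ - g$ can have negative coordinates whenever $g(v) > d(v)-1$, so one genuinely cannot argue by "playing toward" a legal chip-distribution. The sign bookkeeping you flag also checks out: after normalizing $w \ge \mathbf{0}_G$ (harmless since $L\mathbf{1}_G = \mathbf{0}_G$) and setting $\tau = M\mathbf{1}_G - w$, one gets $x + L\tau = (K^+ - f) - Lw = K^+ - g \le K^+$. For comparison, Baker and Norine obtain their Corollary 5.4 from their machinery of reduced divisors; your route buys a short, self-contained proof using only the definitions in this paper together with the dichotomy of Theorem \ref{thm::vegesseg_kommutativ}, at the cost of re-proving a monotonicity/least-action lemma that the chip-firing literature usually invokes as known.
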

  \begin{remark}
    We could have defined the chip-firing game for not necessarily nonnegative distributions as well with the same rules (only active vertices can fire). In this case Theorem \ref{thm::vegesseg_kommutativ} would still hold, and we could have a dual pair for any divisor, but this is not necessary for our purposes.
  \end{remark}

  The following is a straightforward consequence of
  Proposition \ref{prop::dual_pair}.
  \begin{prop}
    \label{prop::rank_of_dual}
    Let $f \in \Div(G)$ be a divisor with $f(v)\leq d(v)-1$ for each $v\in V(G)$,
    and let
    $x \in \Chip(G)$ be its dual pair. Then $\rank(f) = \dist(x) - 1$.
  \end{prop}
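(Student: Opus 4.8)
The plan is to reduce everything to Proposition \ref{prop::dual_pair} via a single elementary observation: effective divisors and chip-distributions are literally the same objects, namely nonnegative integer-valued functions on $V(G)$, and for such an object $g$ one has $\deg(g) = |g|$. Under this identification I will match, term by term, the minimization defining $\rank(f) + 1$ with the one defining $\dist(x)$. Concretely, I would start by rewriting the definition of rank as
$$\rank(f) + 1 = \min\{\deg(g) : g \in \Div(G) \text{ is effective, } \nexists h \in \Div(G) \text{ with } h \sim f - g \text{ effective}\}.$$

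The key step is to re-express the inner condition using Proposition \ref{prop::dual_pair}, and for this I must check that $f - g$ is eligible for that proposition, i.e. that $(f - g)(v) \le d(v) - 1$ for every $v \in V(G)$. This holds because $g$ is effective, so $g(v) \ge 0$, while by hypothesis $f(v) \le d(v) - 1$; hence $(f - g)(v) \le f(v) \le d(v) - 1$. This verification is essentially the only real content of the proof, and I expect it to be the main (if mild) obstacle: the whole argument collapses if $f - g$ fails the hypothesis of Proposition \ref{prop::dual_pair}, so it is worth stating explicitly that effectivity of $g$ is exactly what guarantees eligibility.

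Having secured eligibility, Proposition \ref{prop::dual_pair} gives that an effective divisor equivalent to $f - g$ exists if and only if the dual pair $K^+ - (f - g)$ is terminating. Next I would compute $K^+ - (f - g) = (K^+ - f) + g = x + g$, so that negating the equivalence turns the condition ``$\nexists h \sim f - g$ effective'' into ``$x + g$ is non-terminating''. Substituting this into the minimization above, and using $\deg(g) = |g|$ together with the identification of effective divisors with elements of $\Chip(G)$, I obtain
$$\rank(f) + 1 = \min\{|g| : g \in \Chip(G),\ x + g \text{ is non-terminating}\} = \dist(x),$$
where the last equality is the definition of $\dist(x)$. Rearranging yields $\rank(f) = \dist(x) - 1$, as claimed.
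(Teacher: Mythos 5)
Your proof is correct and is precisely the ``straightforward consequence'' of Proposition \ref{prop::dual_pair} that the paper invokes without spelling out: the eligibility check $(f-g)(v)\le f(v)\le d(v)-1$ for effective $g$, the identity $K^+-(f-g)=x+g$, and the identification of effective divisors with chip-distributions (so $\deg(g)=|g|$) together match the two minimizations term by term. Nothing is missing, and the route is the same as the paper's intended one.
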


\subsection{{\NP}-hardness results}

  In Section \ref{sec::dist_digraph}, based on recent results of Perrot
  and Pham \cite{Perrot}, we prove the following.

  \begin{thm} \label{thm::rang_eulerben_nehez}
    Given a digraph $D$, computing $\dist(\mathbf{0}_D)$ is \NP-hard,
    even for simple Eulerian digraphs.
  \end{thm}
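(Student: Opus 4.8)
The plan is to prove that for every simple Eulerian digraph $D$ one has the exact identity
\[
  \dist(\mathbf{0}_D) = \minfas(D),
\]
and then to invoke the \NP-hardness of computing $\minfas$ on simple Eulerian digraphs (the input we borrow from Perrot and Pham). Since a weakly connected Eulerian digraph is automatically strongly connected, $D$ is strongly connected throughout. The identity splits into two inequalities; the constructive one, $\dist(\mathbf{0}_D)\le\minfas(D)$, is where the method of Perrot and Pham is adapted to chip-firing, while the reverse inequality is the ``larger or equal'' statement already present in \cite{BL92}.

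For $\dist(\mathbf{0}_D)\le\minfas(D)$ I would exhibit a non-terminating distribution with exactly $\minfas(D)$ chips. Fix a minimum feedback arc set $F$ and a topological order $v_1,\dots,v_n$ of the acyclic digraph $D-F$. A short argument shows that $F$ may be taken to be exactly the set of backward edges of this order: the backward edges always form a feedback arc set, they are contained in $F$ (every edge of $D-F$ is forward), and minimality of $F$ forces equality. Now place on each $v_i$ a number of chips $x(v_i)$ equal to the number of edges of $F$ with head $v_i$, so that $|x|=|F|=\minfas(D)$. Firing the vertices in the order $v_1,\dots,v_n$ is then legal: when $v_i$ is fired it has already received one chip along each forward in-edge (all of whose tails precede $v_i$ and have fired), hence it holds $x(v_i)+d^-_{D-F}(v_i)=d^-(v_i)=d^+(v_i)$ chips, exactly its outdegree, using the Eulerian condition. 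After all $n$ firings the distribution changes by $L\mathbf{1}=0$, so we return to $x$ and may repeat forever; by Theorem \ref{thm::vegesseg_kommutativ} this single infinite legal game certifies that $x$ is non-terminating.

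For $\dist(\mathbf{0}_D)\ge\minfas(D)$ I would show that every non-terminating distribution has at least $\minfas(D)$ chips. Starting from such a distribution, a legal game can be continued indefinitely; as the total number of chips is preserved and finite, the game is eventually periodic, and over one period the firing count vector lies in the kernel of $L$. For a strongly connected Eulerian digraph this kernel is spanned by $\mathbf{1}$, so in the periodic part every vertex fires the same positive number of times; in particular every vertex fires. Ordering the vertices by the time of their first firing produces an order whose backward edges form a feedback arc set, and a charging argument assigning distinct chips of the initial distribution to these backward edges yields the bound. I would cite \cite{BL92} for this direction.

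The main obstacle is this lower bound: the naive attempt to simply fire each vertex exactly once in the first-firing order need not be legal, since skipping the repeated firings of earlier vertices may starve a later one, so the charging argument must be set up carefully rather than by replaying a truncated game. The constructive upper bound, by contrast, is clean once $F$ is identified with the backward edges, and combined with the \NP-hardness of $\minfas$ on simple Eulerian digraphs it delivers the theorem.
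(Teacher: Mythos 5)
Your overall architecture is the same as the paper's: prove the identity $\dist(\mathbf{0}_D)=\minfas(D)$ for Eulerian digraphs and combine it with Perrot--Pham's \NP-hardness of computing $\minfas$ on simple Eulerian digraphs. Your proof of the upper bound $\dist(\mathbf{0}_D)\le\minfas(D)$ is correct, and is a pleasant static variant of the paper's argument: the paper proves the stronger lemma that any $x$ with $x(v)\ge d^-_F(v)$ is non-terminating by repeatedly firing a source of $D-F$ and dynamically replacing $F$ by a new minimum feedback arc set after each firing, whereas you fix one topological order of $D-F$, identify $F$ with its backward edges (your minimality argument for this identification is correct), and exhibit a periodic legal game outright, using $L\mathbf{1}=0$ on Eulerian digraphs. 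Both arguments use the Eulerian condition in the same place.

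The genuine gap is in the lower bound $\dist(\mathbf{0}_D)\ge\minfas(D)$: ordering the vertices by the time of their \emph{first} firing does not support the charging you describe. Take the bidirected triangle on $\{a,b,c\}$ (all six arcs present once; a simple Eulerian digraph) and the distribution $x(a)=4$, $x(b)=x(c)=0$, which is non-terminating since $|x|=4>|E(D)|-|V(D)|=3$. Play the legal game that fires $a$ twice and then continues; the first-firing order is $a,b,c$, and the backward arcs of this order are $b\to a$, $c\to a$, $c\to b$. The arc $c\to b$ has head $b$, yet $x(b)=0$: the chips enabling $b$'s first firing were not initial chips but chips received from the two firings of $a$. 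So the per-vertex inequality ``$x(v_i)\ge$ number of backward in-arcs of $v_i$'' is simply false for the first-firing order, because repeated firings of earlier vertices can fully pay for a later vertex's first firing --- exactly the starvation issue you flag, but it breaks the charging itself, not just the truncated replay. This is why the paper (following the note added in proof of \cite{BL92}) orders by the \emph{last} firing instead: play until a moment at which every vertex has fired (your periodicity/kernel argument, or Proposition \ref{prop::euler_vegt_mind_vegt_sokszor_lo}, guarantees such a moment), let $x'$ be the distribution then, and order by last firing. After its last firing a vertex $v_i$ never loses chips again, while every vertex ordered after it fires at least once more, so $x'(v_i)\ge d^-_F(v_i)$ does hold for the backward arcs $F$ of that order, giving $|x|=|x'|\ge|F|\ge\minfas(D)$; extra firings only help in this accounting, which is the asymmetry your first-firing version lacks. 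If you instead intend to black-box this direction entirely by citing \cite{BL92}, the theorem still goes through, but the sketch you give of that direction is not a correct reconstruction of the cited proof.
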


  Using this result we prove the main theorem of this article.

  \begin{thm} \label{thm::chip_rang_NP-teljes}
    For a distribution $x\in \Chip(G)$ on a graph $G$, computing $\dist(x)$ is \NP-hard.
  \end{thm}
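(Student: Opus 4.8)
The plan is to prove \NP-hardness by reduction from the problem shown hard in Theorem~\ref{thm::rang_eulerben_nehez}, namely computing $\dist(\mathbf{0}_D)$ for a simple Eulerian digraph $D$. Concretely, I would give a polynomial-time construction that turns a simple Eulerian digraph $D$ into an undirected graph $G$ together with a distribution $x\in\Chip(G)$ satisfying $\dist(x)=\dist(\mathbf{0}_D)$. A polynomial algorithm for the undirected quantity would then compute the directed one, establishing the claim. The whole content of the reduction is therefore to design the pair $(G,x)$ and to verify the distance equality.

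The heart of the construction is a gadget making undirected chip-firing imitate the directed dynamics. I would keep the vertices of $D$ in place and replace each arc $(u,v)$ by a small gadget of auxiliary vertices joining $u$ and $v$, pre-loading these auxiliary vertices with chips (recorded in $x$) so that the gadget acts as a one-way channel: a chip pushed in by a firing of $u$ is transmitted to $v$, whereas a firing of $v$ does not push a chip back through the gadget to $u$. The original vertices start empty, matching $\mathbf{0}_D$; since nothing is active, $x$ is terminating, exactly as $\mathbf{0}_D$ is. The intended behaviour is that a legal firing of a vertex $v$ in $G$, together with the forced firings it triggers inside the incident gadgets, reproduces the effect of firing $v$ in $D$: it removes $d^+(v)$ chips from $v$ and deposits one chip on each out-neighbour.

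Granting such a gadget, the correspondence between the two games follows from the confluence statement of Theorem~\ref{thm::vegesseg_kommutativ}. I would argue both inequalities. For $\dist(x)\le\dist(\mathbf{0}_D)$, take an optimal set of extra chips making the directed game non-terminating, place the same chips on the corresponding vertices of $G$, and show via the simulation that the resulting undirected game fires forever. For the reverse inequality, take an optimal set $y$ of extra chips making $x+y$ non-terminating and \emph{project} any legal non-terminating undirected game to a legal directed game, obtaining a non-terminating directed distribution whose added weight is at most $|y|$; here one must also check that placing chips on auxiliary vertices is never more economical than on the original vertices, so that the optimum is attained by a witness supported on $V(D)$. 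Since the map between the two chip sets preserves total weight, the distances coincide, and the construction is visibly polynomial.

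I expect the gadget and the proof that it is a genuine diode to be the main obstacle. Undirected firing is symmetric at each vertex, so a naive subdivision vertex on each arc transmits chips in both directions and destroys the orientation; the gadget must be asymmetric, and its pre-loading must be chosen so that reverse chips are absorbed rather than forwarded — at least as long as is needed to decide termination. Since a terminating directed game halts within a number of moves polynomial in the size of $D$ and in $|y|$, the priming need only be polynomially large, matching the ``short run'' nature of the imitation. The delicate points are to show that no auxiliary vertex ever fires in an unintended way that could create spurious non-termination when the directed game halts, or leak a chip backwards and thereby alter $\dist$; ruling these out uniformly over all placements of the added chips $y$ is where the real work lies.
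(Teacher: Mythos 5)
Your overall plan---reduce from Theorem~\ref{thm::rang_eulerben_nehez} by replacing each arc of $D$ with a pre-loaded gadget so that undirected firing imitates directed firing in the short run---is exactly the strategy of the paper (which subdivides each arc, turns the tail segment into $M=8|V(D)|^2|E(D)|\Delta(D)$ parallel edges, and pre-loads each subdivision vertex with $M/2$ chips). But your construction contains a concrete error: you insist that the original vertices start empty, ``matching $\mathbf{0}_D$'', and that the added chips are then mapped identically. This destroys the correspondence of firing thresholds. In $G$, every original vertex $v$ must be adjacent to the gadgets of its \emph{in}-arcs as well as its out-arcs (otherwise chips cannot be delivered to $v$), so the degree of $v$ in $G$ strictly exceeds $d^+(v)$; hence $v$ can be active in $D$ while its copy in $G$ is far from active. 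Concretely, let $D$ be a directed cycle: $\dist(\mathbf{0}_D)=\minfas(D)=1$ by Theorem~\ref{thm::dist = minfas}, yet on your $G$ a single added chip activates nothing---an original vertex needs at least $2$ chips to fire, and a genuine diode must keep its auxiliary vertices safely below threshold (by your own requirement that no spurious firing occurs)---so $\dist_G(x)\geq 2$ and the claimed equality fails. This is precisely why the paper's base distribution also pre-loads each original vertex $\psi(v)$ with $d^+(v)\cdot M$ chips: then $\psi(v)$, of degree $d^+(v)\cdot M+d^-(v)$, is active exactly when it carries $d^-(v)$ surplus chips, and the Eulerian hypothesis $d^-(v)=d^+(v)$ is what makes this threshold agree with the directed one. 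Your proposal never invokes the Eulerian condition at this point, which is a symptom of the gap.

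The second gap is in your forward inequality $\dist_G(x)\le\dist_D(\mathbf{0}_D)$: you say the simulation shows the undirected game ``fires forever'', but any such diode drifts---each absorbed backward chip raises a gadget's load by one, each forward transmission lowers it---so the simulation is only guaranteed legal for on the order of $M$ steps, and a long \emph{finite} legal game on $G$ does not certify non-termination: the known termination-length bounds for undirected multigraphs grow with $|E(G)|$, hence with the priming $M$ itself, so no polynomially large priming outruns them. The paper closes this hole with a periodicity argument specific to Eulerian digraphs: by Propositions~\ref{prop::euler_vegt_mind_vegt_sokszor_lo} and \ref{prop::euler_ha mar mindenki lott} the optimal directed witness may be replaced by one from which all vertices of $D$ can be legally fired once each, returning to the same distribution; the corresponding undirected sequence fires every vertex of $G$ exactly once, the drift cancels over the full period, and $G$ returns exactly to its initial distribution, so the cycle repeats indefinitely. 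Your reverse inequality, by contrast, is essentially the paper's argument (project the undirected game to the directed one, discard surplus chips on auxiliary vertices, and invoke the bound of \cite[Corollary 4.9]{BL92} that a terminating game on an Eulerian digraph halts within $2|V(D)|^2|E(D)|\Delta(D)$ steps), and would be fine once the construction and the forward direction are repaired.
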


  The proof can be found in Section \ref{sec::rank_chip_dist_NP_hard}.
  As a corollary of the theorem and Proposition \ref{prop::rank_of_dual}, we get the following.

  \begin{coroll}
    For a divisor $f\in \Div(G)$ on a graph $G$, computing $\rank(f)$ is \NP-hard, even for a divisor $f$ with $f(v)\leq d(v)-1$ for every $v \in V(G)$.
  \end{coroll}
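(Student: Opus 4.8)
The plan is to reduce the problem of computing $\dist(x)$ for a distribution $x \in \Chip(G)$, which is \NP-hard by Theorem \ref{thm::chip_rang_NP-teljes}, to the problem of computing $\rank(f)$ for a divisor $f \in \Div(G)$. The bridge between the two quantities is exactly Proposition \ref{prop::rank_of_dual}, which asserts that $\rank(f) = \dist(x) - 1$ whenever $x$ is the dual pair of a divisor $f$ satisfying $f(v) \le d(v) - 1$ for every $v \in V(G)$. So the whole argument is essentially a matter of running the dual-pair correspondence backwards and checking that it is polynomial.

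First I would observe that the dual-pair correspondence is available in both directions. Given an arbitrary distribution $x \in \Chip(G)$, set $f = K^+_G - x$, that is, $f(v) = d(v) - 1 - x(v)$. Since $x(v) \ge 0$, we immediately obtain $f(v) \le d(v) - 1$ for every $v \in V(G)$, so $f$ is a legitimate divisor of exactly the restricted type appearing in the statement, and $x = K^+_G - f$ is precisely its dual pair. The map $x \mapsto f$ is computable in time linear in the size of the input, since one only subtracts $x$ from the easily computed distribution $K^+_G$; hence it is a polynomial-time reduction.

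Applying Proposition \ref{prop::rank_of_dual} to this $f$ then yields $\rank(f) = \dist(x) - 1$, equivalently $\dist(x) = \rank(f) + 1$. Consequently, any algorithm computing $\rank$ would, after the trivial preprocessing above and a single addition, also compute $\dist$. Because computing $\dist(x)$ is \NP-hard by Theorem \ref{thm::chip_rang_NP-teljes}, computing $\rank(f)$ is \NP-hard as well, and every instance produced by the reduction satisfies $f(v) \le d(v) - 1$, which gives the sharper form of the statement. I do not expect any genuine obstacle here: the only things to verify are that the dual-pair construction is polynomial and that it lands inside the degree-bounded class of divisors, both of which are immediate from the definitions, so the corollary follows directly from the two cited results.
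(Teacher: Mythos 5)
Your proof is correct and is essentially the paper's own argument: the corollary is obtained exactly by combining Theorem \ref{thm::chip_rang_NP-teljes} with the dual-pair correspondence of Proposition \ref{prop::rank_of_dual}, noting that $f = K^+_G - x$ automatically satisfies $f(v) \le d(v)-1$ and is computable in polynomial time. You have simply made explicit the routine verification that the paper leaves implicit.
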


  In \cite{hladky}, Hladk\'y, Kr\'al' and Norine prove the following statement:
  \begin{prop}[{\cite[Corollary 22.]{hladky}}] \label{prop::hladky_elosztos}
    Let $f$ be a divisor on a graph $G$.
    Let $G'$ be the simple graph obtained from $G$ by subdividing
    each edge of $G$ by an inner point and
    let $f'$ be the divisor on $G'$ that agrees with $f$ on
    the vertices of $G$ and has value $0$ on new points.
    Then $\rank_G(f)=\rank_{G'}(f')$.
  \end{prop}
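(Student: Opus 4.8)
The plan is to leave the purely combinatorial world and argue on metric graphs (tropical curves), where subdividing an edge has a transparent meaning, and then transport the resulting equality of ranks back. Write $\Gamma$ for the metric graph obtained from $G$ by declaring every edge to have length $1$, and regard $f$ as a divisor on $\Gamma$ supported on $V(G)$. Subdividing every edge of $G$ at its midpoint yields $G'$; if we now give each edge of $G'$ length $1$, the resulting metric graph is $2\Gamma$ (each original edge of $\Gamma$ is traversed by two unit edges, hence has length $2$). Under the scaling homeomorphism $\sigma_2\colon\Gamma\to 2\Gamma$ the divisor $f$ is carried to exactly $f'$: original vertices map to original vertices with the same multiplicities, while the midpoints carry no chips and are just ordinary points of $2\Gamma$. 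Thus, modulo the dictionary between graph rank and metric-graph rank, the statement reduces to comparing $\rank_\Gamma(f)$ with $\rank_{2\Gamma}(\sigma_{2*}f)$.

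The proof then rests on two inputs. The first is the comparison between the combinatorial and the tropical rank: for a graph with all edges of length $1$ and a divisor supported on its vertices, the graph rank equals the rank of the same divisor on the associated metric graph (this is the content of \cite{hladky} and \cite[Theorem 1.6]{Luo}). Applying it to $G$ gives $\rank_G(f)=\rank_\Gamma(f)$, and applying it to $G'$ gives $\rank_{G'}(f')=\rank_{2\Gamma}(\sigma_{2*}f)$. The second input, which I would prove, is that the tropical rank is invariant under rescaling the metric: for any $c>0$ the map $\phi\mapsto c\,(\phi\circ\sigma_c^{-1})$ is a bijection from the integer-slope piecewise-linear functions on $\Gamma$ to those on $c\Gamma$ that preserves the slope on every edge, and therefore sends $\mathrm{div}(\phi)$ to $\sigma_{c*}\mathrm{div}(\phi)$. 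Consequently $\sigma_{c*}$ respects linear equivalence, and since it also preserves degree and effectivity it preserves rank; in particular $\rank_{2\Gamma}(\sigma_{2*}f)=\rank_\Gamma(f)$. Chaining the three equalities gives $\rank_G(f)=\rank_{G'}(f')$.

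The main obstacle is the first input: the graph-to-tropical comparison is the genuinely content-bearing step, and it is exactly what \cite{hladky} establish. One is tempted to seek a direct combinatorial argument that never leaves the graphs, but this runs into a real obstruction: the midpoints create genuinely new linear-equivalence classes, since the number of spanning trees of $G'$ (the order of the Picard group) strictly exceeds that of $G$ whenever $G$ is not a tree, so one cannot reduce an arbitrary effective divisor on $G'$ to one supported on $V(G)$. This is precisely why I route the argument through the continuum of the metric graph, where linear equivalence is generated by all integer-slope functions and the scaling symmetry becomes available. By contrast, the rescaling lemma is elementary and the reduction in the first paragraph is bookkeeping, so essentially all the depth sits in the comparison theorem.
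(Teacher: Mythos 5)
The paper gives no proof of this proposition at all --- it is quoted verbatim as \cite[Corollary 22]{hladky} and used as a black box to pass from multigraphs to simple graphs --- so there is no internal argument to compare yours against; I will judge it on its own terms. Your argument is correct, and it is essentially the route by which the original reference derives that corollary: first establish the graph/metric-graph comparison theorem, then observe that subdivision changes only the combinatorial model, not the underlying metric space up to scaling. Three remarks. First, you can delete the rescaling lemma and the map $\sigma_2$ altogether by invoking the comparison theorem in the form this paper already quotes (Theorem \ref{thm::metrikus_rang_rang}, i.e.\ \cite[Theorem 1.6]{Luo}), which allows \emph{arbitrary} positive edge lengths: assigning length $1$ to the edges of $G$ and length $1/2$ to the edges of $G'$ produces the \emph{same} metric graph $\Gamma$, on which $f'$ is literally the divisor $f$; two applications of that theorem then give $\rank_G(f)=\rank_{\Gamma}(f)=\rank_{G'}(f')$ with no bookkeeping. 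Second, a caveat you should make explicit: your ``first input'' is the main theorem of the very paper whose Corollary 22 you are proving, so non-circularity rests on the fact that in \cite{hladky} (via reduced divisors) and in \cite{Luo} (via rank-determining sets) that theorem is proved independently of any subdivision statement; this is indeed the case, but a reader cannot tell from your write-up that you have checked it. Third, your claimed ``real obstruction'' to a purely combinatorial proof is overstated: the strict growth of the number of spanning trees shows only that the divisor classes of $G$ and $G'$ do not biject, not that no direct graph-level argument exists (one can compare reduced divisors on $G$ and $G'$ directly); this is harmless motivation, but it should not be presented as evidence that the detour through metric graphs is forced.
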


  By dualizing this statement, we get the following: For a distribution $x\in\Chip(G)$,
  if we get $x'\in \Chip(G')$ from $x$ so that we put $d(v) - 1 - 0 = 1$
  chip on each new vertex, and on the vertices of $G$, $x'$ agrees with $x$,
  then
  $\dist_G(x) = \rank_G(K^+_G - x) + 1 = \rank_{G'}(K^+_{G'} - x') + 1 =
  \dist_{G'}(x')$.

  When proving Theorem \ref{thm::chip_rang_NP-teljes}, we show a 
  somewhat stronger statement: By Remark
  \ref{rem::dist_NP_erosites}, computing $\dist$ is \NP-hard even 
  for graphs with $|E(G)|\leq 9|V(G)|^5$. For such a $G$, 
  $|V(G')|\leq |V(G)|+|E(G)|\leq 10|V(G)|^5$. Hence $G'$ and $x'$ can be computed in polynomial time for such a graph $G$ and $x\in \Chip(G)$, giving the following corollary.

  \begin{coroll}
    For a distribution $x\in \Chip(G)$ on a simple graph $G$, computing $\dist(x)$ is \NP-hard.
  \end{coroll}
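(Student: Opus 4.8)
The plan is to prove this corollary by a polynomial-time reduction from the general-graph version already established in Theorem~\ref{thm::chip_rang_NP-teljes}, turning an arbitrary (possibly multi-edged) graph into a simple one by subdivision while leaving $\dist$ unchanged. The tool that makes $\dist$ invariant under subdivision is exactly the dualized form of Proposition~\ref{prop::hladky_elosztos} derived just above the statement, so the heart of the argument is verifying that this transfer can be carried out in polynomial time.

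First I would invoke the strengthened form of Theorem~\ref{thm::chip_rang_NP-teljes} recorded in Remark~\ref{rem::dist_NP_erosites}: computing $\dist(x)$ remains \NP-hard even when the input is restricted to pairs $(G,x)$ with $G$ satisfying $|E(G)| \le 9|V(G)|^5$. This polynomial bound on the number of edges is the crucial ingredient; all the genuine difficulty lives in obtaining it, and the remainder of the proof is a routine transfer along an already-stated equivalence.

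Given such a restricted instance $(G,x)$, I would form the subdivision $G'$ of $G$, inserting one inner vertex on each edge. The resulting graph is simple by construction, and $|V(G')| \le |V(G)| + |E(G)| \le 10|V(G)|^5$, so $G'$ has polynomially many vertices in $|V(G)|$ and can be written down in polynomial time. I would then define $x' \in \Chip(G')$ by setting $x'$ equal to $x$ on the original vertices and placing exactly one chip on each new subdivision vertex; this $x'$ is precisely the dual pair of the divisor $f'$ appearing in Proposition~\ref{prop::hladky_elosztos}.

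Finally, by the dualized form of Proposition~\ref{prop::hladky_elosztos}, we have $\dist_G(x) = \dist_{G'}(x')$. Consequently, any algorithm computing $\dist$ on simple graphs, applied to $(G',x')$, also computes $\dist_G(x)$; since the map $(G,x) \mapsto (G',x')$ is computable in polynomial time, this is a valid polynomial-time reduction, and the \NP-hardness of the restricted general-graph problem carries over to simple graphs. The one point demanding care is exactly this polynomiality: without the edge bound from Remark~\ref{rem::dist_NP_erosites}, a graph whose edge multiplicities are encoded in binary could have exponentially many edges in the input size, so the subdivision $G'$ would be of exponential size and the reduction would fail. This is why I expect the main obstacle not to be in the subdivision argument itself, but in having first secured the polynomial edge bound underlying it.
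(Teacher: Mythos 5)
Your proposal is correct and follows essentially the same route as the paper: invoke the strengthened hardness statement of Remark~\ref{rem::dist_NP_erosites} for graphs with $|E(G)|\leq 9|V(G)|^5$, subdivide every edge to obtain a simple graph $G'$ with polynomially many vertices, transfer the distribution by placing one chip on each new vertex, and conclude via the dualized form of Proposition~\ref{prop::hladky_elosztos} that $\dist_G(x)=\dist_{G'}(x')$. You also correctly identify the same point the paper emphasizes, namely that the polynomial edge bound is what makes the subdivision reduction polynomial-time.
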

  Using Proposition \ref{prop::rank_of_dual} again, 
  we have the following. 
  \begin{coroll}
    For a divisor $f\in \Div(G)$ on a simple graph $G$, computing $\rank(f)$ is \NP-hard.
  \end{coroll}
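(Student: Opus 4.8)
The plan is to obtain this corollary as an immediate consequence of the preceding corollary (the \NP-hardness of computing $\dist(x)$ for distributions on simple graphs) together with Proposition \ref{prop::rank_of_dual}, which supplies an exact numerical identity $\rank(f) = \dist(x) - 1$ between a divisor and its dual pair on the same underlying graph. Since the dual-pair correspondence leaves the underlying graph unchanged, it converts a hardness statement about $\dist$ on simple graphs into one about $\rank$ on simple graphs with essentially no extra work.

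First I would take an arbitrary instance of the $\dist$ problem guaranteed hard by the preceding corollary: a simple graph $G$ together with a distribution $x \in \Chip(G)$. From this I would build the divisor $f = K^+_G - x$, where $K^+_G(v) = d(v) - 1$; by construction $x$ is precisely the dual pair of $f$. Because $x(v) \ge 0$ for every vertex, we have $f(v) = d(v) - 1 - x(v) \le d(v) - 1$, so the hypothesis of Proposition \ref{prop::rank_of_dual} is met, and the proposition gives $\rank_G(f) = \dist(x) - 1$, equivalently $\dist(x) = \rank_G(f) + 1$.

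Next I would note that forming $f = K^+_G - x$ requires only the vertex degrees of $G$ and is therefore computable in polynomial time, and that the resulting $f$ is a divisor on the same simple graph $G$. Consequently, any hypothetical polynomial-time algorithm computing $\rank_G(f)$ on simple graphs would yield, via $\dist(x) = \rank_G(f) + 1$, a polynomial-time algorithm for $\dist(x)$ on simple graphs, contradicting the preceding corollary. This establishes that computing $\rank(f)$ is \NP-hard already for simple graphs.

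I do not expect a genuine obstacle here: the entire difficulty has been absorbed into Proposition \ref{prop::rank_of_dual} and the preceding corollary. The only points demanding (routine) verification are that the dual-pair map preserves simplicity of the underlying graph and that the resulting divisor stays in the range $f(v) \le d(v) - 1$ required by the proposition; both hold automatically because the dual pair lives on the same graph $G$ and is formed from a nonnegative distribution $x$.
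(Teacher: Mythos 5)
Your proposal is correct and matches the paper's own argument: the paper derives this corollary exactly by applying Proposition \ref{prop::rank_of_dual} to the preceding corollary on the \NP-hardness of computing $\dist(x)$ for simple graphs, which is precisely your reduction $f = K^+_G - x$, $\rank(f) = \dist(x) - 1$. Your additional checks (that $f(v) \le d(v)-1$ holds automatically and that the reduction is polynomial-time on the same graph) are the routine details the paper leaves implicit.
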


  Using a result of \cite{Luo}, we get that computing the rank of divisors is also \NP-hard for so called tropical curves.
  Informally, a metric graph is a graph, where each edge has a positive length, and we consider our graph to be a metric space (the inner points of the edges are also points of this metric space). Tropical curve is more general in that we also allow some edges incident with vertices of degree one to have infinite length.
  A divisor on a tropical curve is an integer-valued function on the curve with only finitely many nonzero values. The notions of the degree of a divisor, linear equivalence, effective divisor and the rank can be defined as well, see \cite{hladky}.

  A metric graph $\Gamma$ corresponds to the graph $G$, if $\Gamma$ is obtained from $G$ by assigning some positive length to each edge.
  \begin{thm}[{\cite[Theorem 1.6]{Luo}}] \label{thm::metrikus_rang_rang}
    Let $f$ be a divisor on a graph $G$, and $\Gamma$ be a metric graph corresponding to $G$. Then $\rank_G(f)=\rank_{\Gamma}(f)$.
  \end{thm}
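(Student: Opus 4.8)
The plan is to prove the two inequalities $\rank_\Gamma(f)\le \rank_G(f)$ and $\rank_\Gamma(f)\ge \rank_G(f)$ separately, reducing both to a single comparison between metric linear equivalence and combinatorial linear equivalence for divisors supported on the vertices of a model of $\Gamma$. Throughout I would use that, for a divisor $D$ supported on $V(G)$, firing a vertex combinatorially on $G$ is a special case of a tropical rational function on $\Gamma$ (a continuous piecewise linear function with integer slopes), so combinatorial equivalence on $G$ always implies linear equivalence on $\Gamma$; the whole content lies in the converse.

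The key lemma I would establish is: \emph{if $D$ is a divisor supported on $V(G)$, then $D$ is linearly equivalent on $\Gamma$ to an effective divisor if and only if it is linearly equivalent on $G$ to an effective divisor, and this does not depend on the edge lengths of $\Gamma$.} I would prove it through reduced divisors. Fix $q\in V(G)$. On a graph and on a metric graph alike, every divisor class contains a unique $q$-reduced representative, computable by Dhar's burning algorithm, and a divisor is equivalent to an effective one exactly when its $q$-reduced representative is nonnegative at $q$. The crucial point is that the burning process applied to a vertex-supported divisor never creates chips at interior points of edges and only compares the number of chips at a vertex with the number of edges along which fire reaches it; hence the metric $q$-reduced form of $D$ coincides with the combinatorial $q$-reduced form and is read off from the same integer data, independently of the lengths. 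Comparing the value at $q$ then yields the lemma.

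Granting the lemma, the inequality $\rank_\Gamma(f)\le \rank_G(f)$ is immediate: writing $r=\rank_G(f)$, there is an effective divisor $E$ of degree $r+1$ supported on $V(G)$ with $f-E$ not equivalent to an effective divisor on $G$; by the lemma $f-E$ is not equivalent to an effective divisor on $\Gamma$ either, so $\rank_\Gamma(f)\le r$. For the reverse inequality, let $E$ be an arbitrary effective divisor on $\Gamma$ of degree $r$; I would choose a model $G'$ of $\Gamma$, obtained from $G$ by subdividing edges, so that the finitely many points of the support of $E$ all become vertices of $G'$. Extending $f$ by zeros to $f'$ on $G'$, Proposition \ref{prop::hladky_elosztos} gives $\rank_{G'}(f')=\rank_G(f)=r$, so $f'-E$ is equivalent on $G'$ to an effective divisor, and hence (by the trivial direction noted above) equivalent on $\Gamma$ to an effective divisor. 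As $E$ was arbitrary, $\rank_\Gamma(f)\ge r$, completing the argument.

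The main obstacle is the key lemma, specifically the direction asserting that a metric linear equivalence to an effective divisor can be replaced by a combinatorial one for vertex-supported divisors; this is exactly where the passage from the continuous to the discrete setting must be controlled, and it is what forces the reduced-divisor (or an equivalent rounding) analysis rather than a naive restriction of tropical rational functions to the model. A secondary technical point is to confirm that Proposition \ref{prop::hladky_elosztos}, stated for a single simultaneous subdivision, extends to the arbitrary subdivisions needed to make the support of $E$ vertices of $G'$; this follows by applying the same argument edge by edge.
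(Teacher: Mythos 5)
The paper offers no proof of this statement at all---it is imported verbatim from \cite{Luo}---so your attempt has to stand on its own, and it contains a genuine gap. The transfer principle your whole architecture rests on, namely that ``combinatorial equivalence on $G$ always implies linear equivalence on $\Gamma$,'' is false once the edge lengths of $\Gamma$ are not all equal, and the paper's notion of ``corresponding'' allows arbitrary positive lengths. Concretely, let $G$ consist of two vertices $u,w$ joined by two parallel edges, with lengths $1$ and $2$, so $\Gamma$ is a circle of circumference $3$. Firing $w$ shows $2\,\mathbf{1}_u-2\,\mathbf{1}_w\sim 0$ on $G$, so $\rank_G(2\,\mathbf{1}_u-2\,\mathbf{1}_w)=0$; but on $\Gamma$ a piecewise linear function with this divisor would have to be linear on the two arcs with integer slopes $s$ and $s+2$ satisfying $s\cdot 1+(s+2)\cdot 2=0$, which is impossible, so this divisor is not equivalent to any effective divisor on $\Gamma$ and $\rank_\Gamma(2\,\mathbf{1}_u-2\,\mathbf{1}_w)=-1$. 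The same example refutes the ``independently of the lengths'' claim in your key lemma: the metric $w$-reduced representative here is $\mathbf{1}_m-\mathbf{1}_w$ for an \emph{interior} point $m$ of the longer edge, so metric reduction does create chips at interior points and does not coincide with combinatorial reduction. In other words, the statement read literally with this paper's definitions is false; the true theorem of Hladk\'y--Kr\'al'--Norine and Luo concerns the metric graph whose edge lengths are all equal, and any correct proof must use that hypothesis, which yours never does.

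There is a second, independent failure even under the equal-length reading: the extension of Proposition \ref{prop::hladky_elosztos} to non-uniform subdivisions, which you assert ``follows by applying the same argument edge by edge.'' This is false already as a purely combinatorial statement: subdividing only one of the two parallel edges above yields a triangle, on which $2\,\mathbf{1}_u-2\,\mathbf{1}_w$ has rank $-1$, not $0$. Uniform subdivision preserves rank precisely because it amounts to a homothety of the unit-length metric graph; subdividing a single edge changes the \emph{relative} lengths, which is exactly the phenomenon the theorem has to control. Finally, even granting subdivision invariance, your proof of $\rank_\Gamma(f)\ge\rank_G(f)$ must handle effective divisors $E$ supported at irrational positions along edges, which no finite subdivision of $G$ captures; showing that such $E$ never decrease the rank further than vertex-supported ones do---i.e., that $V(G)$ is a rank-determining set---is the actual content of Luo's paper, and it is precisely the step your sketch defers.
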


  As a metric graph is a special tropical curve, we get the following corollary:

  \begin{coroll}
    For a tropical curve $\Gamma$, $f\in \Div(\Gamma)$,
    computing $\rank(f)$ is \NP-hard.
  \end{coroll}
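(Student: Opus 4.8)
The plan is to reduce the rank-computation problem on simple graphs—already shown to be \NP-hard in the preceding corollary—to the rank-computation problem on tropical curves, using Theorem \ref{thm::metrikus_rang_rang} as the bridge. That theorem tells us that the rank of a divisor is preserved when we pass from a graph $G$ to any metric graph $\Gamma$ corresponding to it, so the only work is to package an \NP-hard graph instance as a tropical-curve instance.

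Concretely, I would start from an arbitrary instance of the simple-graph rank problem: a simple graph $G$ together with a divisor $f\in\Div(G)$, for which computing $\rank_G(f)$ is \NP-hard. I would then form a metric graph $\Gamma$ by assigning length $1$ (any fixed positive length works) to every edge of $G$; this construction is clearly polynomial in the size of $G$, and by definition $\Gamma$ corresponds to $G$ in the sense required by Theorem \ref{thm::metrikus_rang_rang}. Since $V(G)$ sits inside the point set of $\Gamma$, the function $f$ is already an integer-valued function on $\Gamma$ with finite support, so it defines a divisor $f\in\Div(\Gamma)$ without any modification.

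Applying Theorem \ref{thm::metrikus_rang_rang} to this data yields $\rank_G(f)=\rank_\Gamma(f)$. As a metric graph is a special tropical curve, the pair $(\Gamma,f)$ is a legitimate instance of the tropical-curve rank problem, and its answer coincides with $\rank_G(f)$. Therefore any polynomial-time algorithm for computing $\rank(f)$ on tropical curves would compute $\rank_G(f)$ in polynomial time, contradicting the \NP-hardness of the latter. This establishes the \NP-hardness of computing the rank of a divisor on a tropical curve. There is no substantive obstacle here: the reduction is immediate once Theorem \ref{thm::metrikus_rang_rang} is available. The only points needing a word of care are that the metric graph and divisor admit a polynomial-size encoding and can be produced in polynomial time, and that a metric graph genuinely qualifies as a tropical curve—both of which are guaranteed directly by the definitions recalled above.
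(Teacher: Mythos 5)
Your proof is correct and follows exactly the paper's argument: reduce from the \NP-hard simple-graph rank problem via Theorem \ref{thm::metrikus_rang_rang} by assigning unit lengths to the edges, and observe that the resulting metric graph is a special tropical curve. No gaps; this matches the paper's reasoning.
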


From the positive side, we show the following:

\begin{prop}
 Deciding whether for a given divisor $f$ on a graph $G$, and integer $k$, $\rank(f)\leq k$ is in \NP.
\end{prop}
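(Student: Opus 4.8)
The plan is to exhibit a polynomial-size certificate for the statement $\rank(f)\le k$. By the definition of the rank, $\rank(f)\le k$ holds if and only if there is an effective divisor $g\in\Div(G)$ with $\deg(g)\le k+1$ such that no effective divisor is linearly equivalent to $f-g$; indeed a minimal such $g$ satisfies $\deg(g)=\rank(f)+1\le k+1$. This $g$ will be the first half of the certificate: it is given by its $|V(G)|$ coordinates, each in $\{0,\dots,k+1\}$, hence of size polynomial in the input (which contains $\log k$). The genuine issue is to certify the universally quantified statement that \emph{no} effective divisor is linearly equivalent to $f-g$.

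To do this I would use reduced divisors. Fix a vertex $q\in V(G)$. By Baker and Norine \cite{BN-Riem-Roch}, every linear equivalence class contains a unique $q$-reduced divisor $D'$; it satisfies $0\le D'(v)\le d(v)-1$ for every $v\ne q$; being $q$-reduced can be tested in polynomial time by Dhar's burning algorithm; and a divisor is linearly equivalent to an effective divisor if and only if its $q$-reduced representative $D'$ satisfies $D'(q)\ge 0$. Accordingly, the second half of the certificate is a firing vector $z\in\mathbb{Z}^{|V(G)|}$ together with the divisor $D'=(f-g)+Lz$. The verifier checks that $D'=(f-g)+Lz$, that $D'(v)\ge 0$ for all $v\ne q$ while $D'(q)<0$, and that $D'$ is $q$-reduced (by running Dhar's burning algorithm from $q$). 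If all checks pass, then $D'$ is the $q$-reduced representative of $f-g$ and is not effective, so $f-g$ is not linearly equivalent to any effective divisor, and $g$ witnesses $\rank(f)\le k$. Conversely, if $\rank(f)\le k$ then such $g$, $z$, $D'$ exist.

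It remains to bound the size of the certificate, which is the point requiring care. The entries of $D'$ off $q$ lie in $[0,d(v)-1]$, and $D'(q)=\deg(f)-\deg(g)-\sum_{v\ne q}D'(v)$ has absolute value at most $|\deg(f)|+k+1+\sum_{v}(d(v)-1)$, so $D'$ has polynomial bit-size. Imposing the normalization $z(q)=0$ (permissible since the kernel of $L$ is spanned by $\mathbf{1}_G$), the vector $z$ is the unique solution of the nonsingular system obtained from $Lz=D'-(f-g)$ by deleting the coordinate $q$; by Cramer's rule its denominator divides the number of spanning trees of $G$ and its numerators are controlled by the Hadamard inequality, so $z$ also has polynomial bit-size. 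Since Dhar's burning algorithm and the arithmetic checks run in polynomial time, the entire verification is polynomial, placing the problem in \NP.

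The main obstacle is the second step. A naive certificate consisting of a single divisor equivalent to $f-g$ cannot prove non-effectivity, since effective and non-effective divisors may be linearly equivalent; one genuinely needs a canonical representative — the $q$-reduced divisor — whose effectivity is equivalent to that of the whole class and which is simultaneously efficiently recognizable (through Dhar's burning algorithm) and of polynomially bounded size. Bounding the bit-length of the accompanying firing vector $z$ is the secondary point, handled by the linear-algebra estimate above.
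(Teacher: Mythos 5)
Your proof is correct, and its skeleton coincides with the paper's: the witness is an effective divisor $g$ with $\deg(g)\le k+1$ such that no effective divisor is equivalent to $f-g$ (i.e.\ $\rank(f-g)=-1$), together with the observation that $g\ge 0$ and $\deg(g)\le k+1$ force each $g(v)\le k+1$, so $g$ has size polynomial in the input. Where you genuinely differ is in how the verifier checks $\rank(f-g)=-1$: the paper simply invokes the known result (cited there as \cite{pot_theory}) that deciding whether the rank of a divisor is at least a constant $c$ --- in particular at least $0$ --- can be done in polynomial time, so its certificate consists of $g$ alone and the verifier needs no further data. You instead make this step explicit by augmenting the certificate with a firing vector $z$ and the $q$-reduced representative $D'=(f-g)+Lz$, verified via Dhar's burning algorithm, with the bit-size of $z$ controlled by the normalization $z(q)=0$ and a Cramer's rule estimate on the reduced Laplacian. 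This buys self-containedness: your verifier relies only on the standard facts about reduced divisors (existence, uniqueness, the effectivity criterion $D'(q)\ge 0$, and polynomial-time recognizability) rather than on a black-box algorithm. The cost is the extra bookkeeping, notably the integrality and size bound for $z$, which the paper's route avoids entirely because its verifier computes the relevant reduced divisor itself instead of receiving it. Both arguments are valid; in effect you re-prove inside the certificate the polynomial-time decidability of ``$\rank\ge 0$'' that the paper cites.
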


\begin{proof}
 For an input $(f,k)$ with $\rank(f)\leq k$, our witness is the divisor $g\geq 0$ such that $\rank(f-g)=-1$, and $\deg(g)\leq k+1$ (such a $g$ exists because $\rank(f)\leq k$).

 First, we need to check that $g$ can be given so that it has size polynomial in the size of $(f,k)$. As $\deg(g)\leq k+1$, and $g\geq 0$, we have $g(v)\leq k+1$ for each vertex $v$. Therefore, the size of $g$ is at most $O(|V(G)|\cdot \log k)$.

 On the other hand, it can be checked in polynomial time if $\rank(f-g)=-1$ \cite{pot_theory}, and also whether $\deg(g)\leq k+1$.
\end{proof}

By applying Proposition \ref{prop::rank_of_dual}, deciding whether for a given chip-distribution $x$, and integer $k$, $\dist(x)\leq k$ is also in \NP.

\section{Minimal non-terminating distributions on Eulerian digraphs}
\label{sec::dist_digraph}
  In this section we prove
  Theorem \ref{thm::rang_eulerben_nehez}, i.e., 
  that computing the minimum number of chips in a non-terminating distribution is \NP-hard for a simple Eulerian digraph $D$.

  We use the method of Perrot and Pham. In the paper \cite{Perrot},
  they prove the \NP-hardness of an analogous question in the abelian sandpile model,
  which is a closely related variant of the chip-firing game.

  Using the ideas of \cite{Perrot}, we first give a formula for the minimum
  number of chips in a non-terminating distribution on an Eulerian digraph.
  As a motivation, let us have a look at the analogous question on undirected graphs, which was solved by Bj\"orner, Lov\'asz and Shor.
  
    \begin{thm}[{\cite[Theorem 2.3]{BLS91}}]
      \label{thm::also_korlat}
      Let $G$ be a graph. Then $\dist(\mathbf{0}_G)=|E(G)|$.
    \end{thm}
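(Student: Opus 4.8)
The plan is to prove the two inequalities $\dist(\mathbf{0}_G)\le |E(G)|$ and $\dist(\mathbf{0}_G)\ge |E(G)|$ separately. Since $\dist(\mathbf{0}_G)$ is the least number of chips in a non-terminating distribution, the first inequality amounts to exhibiting a non-terminating distribution carrying exactly $|E(G)|$ chips, while the second asks us to show that every distribution with at most $|E(G)|-1$ chips terminates.

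For the upper bound I would fix an arbitrary acyclic orientation $D$ of $G$ (order the vertices $v_1,\dots,v_n$ and orient each edge from the lower to the higher index) and set $x(v)=d^+_D(v)$, so that $|x|=\sum_{v}d^+_D(v)=|E(G)|$. Firing the vertices in the order $v_1,\dots,v_n$ is legal: when $v_i$ is about to fire, it has already received one chip along every edge to an earlier (hence already fired) neighbour, so it holds $d^+_D(v_i)+d^-_D(v_i)=d(v_i)$ chips and is active. Because firing every vertex exactly once changes the distribution by $L\mathbf{1}_G=\mathbf{0}_G$ (the rows of the undirected Laplacian sum to zero), we return to $x$ and may repeat the sweep indefinitely; hence $x$ is non-terminating and $\dist(\mathbf{0}_G)\le |E(G)|$.

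For the lower bound, start from any non-terminating distribution $x$ and run an infinite legal game. Firing preserves the number of chips and only finitely many distributions carry $|x|$ chips, so some distribution recurs; replacing $x$ by this recurrent one (which has the same total) I may assume the game returns to $x$ after finitely many firings. The firing vector $p$ of one period satisfies $Lp=\mathbf{0}_G$, and since $G$ is connected the kernel of its Laplacian is spanned by $\mathbf{1}_G$, so $p=c\,\mathbf{1}_G$ with $c\ge 1$; in particular every vertex fires during the period. Now I would order the vertices $u_1,\dots,u_n$ by the time of their \emph{last} firing in the period. After $u_i$ fires for the last time it only gains chips until the period ends, and each $u_j$ with $j>i$ fires at least once after that moment, delivering at least $d(u_i,u_j)$ chips to $u_i$. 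Hence at the end of the period $x(u_i)\ge \sum_{j>i} d(u_i,u_j)$, the number of edges joining $u_i$ to later vertices. Summing over $i$ counts every edge exactly once, giving $|x|\ge |E(G)|$.

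The step I expect to be the crux is the choice of ordering in the lower bound. The tempting \emph{first}-firing order does not work: an earlier neighbour may fire many times before $u_i$ first fires and over-supply it with chips, so summing the activeness conditions only yields a bound weaker than $|E(G)|$. The real content is to combine the periodicity (forcing every vertex to fire, via $\ker L=\langle\mathbf{1}_G\rangle$) with the \emph{last}-firing order, which guarantees that each later neighbour contributes at least once \emph{after} $u_i$ has stopped firing, so no chip is double-counted. The remaining ingredients — conservation of chips, the pigeonhole reduction to a recurrent distribution, and $\dim\ker L=1$ for connected $G$ — are routine.
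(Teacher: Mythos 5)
Your proof is correct, and its skeleton matches the paper's: an acyclic orientation carrying $|E(G)|$ chips for the upper bound, and the last-firing order for the lower bound. For the upper bound the paper proves a slightly more general lemma (any $x$ with $x(v)\ge d^-_D(v)$ for some acyclic orientation $D$ is non-terminating, shown by repeatedly firing a sink and reversing its incident edges), which it reuses later in the paper; your topological sweep with $x(v)=d^+_D(v)$ is the mirror image of this argument, specialised to exactly what the theorem needs. The genuine difference is in the lower bound: where the paper simply cites the fact that in a non-terminating game every vertex fires infinitely often (\cite[Lemma 2.1]{BLS91}) to reach a moment after which every vertex has fired at least once, you derive this from scratch --- pigeonhole on the finitely many distributions with $|x|$ chips gives a recurrent distribution, and the firing vector $p$ of one period satisfies $Lp=\mathbf{0}_G$, hence $p=c\,\mathbf{1}_G$ with $c\ge 1$ because $\ker L$ is spanned by $\mathbf{1}_G$ for connected $G$. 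This buys self-containedness at the price of invoking the algebraic fact $\dim\ker L=1$; the final counting, bounding $x(u_i)$ from below by the number of edges joining $u_i$ to vertices whose last firing comes later, is identical to the paper's bound $x'(v)\ge d^-_D(v)$ for the orientation directing each edge toward its earlier-last-fired endpoint. Your closing remark about why the first-firing order would fail is also apt: the paper's argument hinges on exactly the same point.
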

    We sketch the proof as a motivation for the directed
    case.
    \begin{proof}
      First we prove the following useful lemma.
      \begin{lemma}[\cite{BLS91}]
        \label{lem::chip_dist_over_acyclic}
        Let $D$ be an acyclic orientation of $G$ and let
        $x\in \Chip(G)$ be a distribution with $x(v) \ge d^-_D(v)$
        for each $v \in V(G)$. Then $x$ is non-terminating.
      \end{lemma}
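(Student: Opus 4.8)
The plan is to exhibit a single legal game that never terminates, which by Theorem \ref{thm::vegesseg_kommutativ} suffices to prove that $x$ is non-terminating. The key observation is that firing every vertex of $G$ exactly once leaves the distribution unchanged, since $L\mathbf{1}_G = \mathbf{0}_G$ (each row of the undirected Laplacian sums to zero: $-d(v_i) + \sum_{j\neq i} d(v_i,v_j) = 0$). Thus if I can find an order in which each vertex is legally fired exactly once, the resulting distribution equals $x$ again, and I can repeat the same sequence forever.

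To produce such an order, I would exploit the acyclicity of $D$: fix a topological order $v_1, \dots, v_n$ of $V(G)$ with respect to $D$, so that every edge of $D$ is oriented from a lower-indexed to a higher-indexed vertex. I then fire the vertices in the reverse order $v_n, v_{n-1}, \dots, v_1$.

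The heart of the argument is checking that each firing is legal. When it is $v_k$'s turn, exactly the vertices $v_{k+1}, \dots, v_n$ have already fired (each once), while $v_k$ itself has not yet fired, so it has only gained chips. Every out-edge of $v_k$ in $D$ leads to some $v_j$ with $j > k$, and when that $v_j$ fired it sent one chip back along this edge to $v_k$. Hence $v_k$ has received exactly $d^+_D(v_k)$ chips (counting the multiplicities $d(v_k,v_j)$) from the earlier firings, while it has received nothing along its in-edges, since those originate from lower-indexed, not-yet-fired vertices. Its current chip count is therefore $x(v_k) + d^+_D(v_k) \ge d^-_D(v_k) + d^+_D(v_k) = d(v_k)$, so $v_k$ is active and may be fired legally. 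As this holds for every $k$ from $n$ down to $1$, the whole round is legal and returns the distribution to $x$, which can then be repeated indefinitely.

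I expect the main obstacle to be the careful bookkeeping in the legality check, specifically verifying that $v_k$ receives precisely $d^+_D(v_k)$ chips and nothing more before its own firing. This is exactly where acyclicity enters: the topological order guarantees that the in-edges of $v_k$ come from vertices fired strictly after $v_k$ (contributing nothing yet), while its out-edges lead to vertices already fired. Summing the edge multiplicities to $d^+_D(v_k)$ and invoking $d^-_D(v_k) + d^+_D(v_k) = d(v_k)$ is routine, but must be stated precisely to make the bound $x(v_k) \ge d^-_D(v_k)$ do its work.
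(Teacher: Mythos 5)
Your proof is correct and follows essentially the same route as the paper: the paper repeatedly fires a sink of the current acyclic orientation and then reverses the edges at that sink, which generates precisely your reverse-topological-order firing sequence, with the hypothesis $x(v)\ge d^-_D(v)$ re-established as an invariant after each firing rather than verified by your explicit chip count. The only extra ingredient in your write-up is the periodicity observation $L\mathbf{1}_G=\mathbf{0}_G$ (firing every vertex once restores $x$), which the paper's invariant formulation does not need; both arguments are sound.
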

      \begin{proof}
        Since the orientation is acyclic, there is a sink, i.e., a
        vertex $v_0 \in V(G)$ with $d(v_0) = d^-_D(v_0) \leq x(v_0)$.
        Hence $v_0$ is active with respect to $x$. Fire $v_0$ and
        denote the resulting distribution by $x'$.
        Reverse the direction of the edges incident to $v_0$ and
        denote the resulting directed graph by $D'$. It is easy to
        see that $D'$ is acyclic and $d^-_{D'}(v) \le x'(v)$ for each
        $v \in V(G)$. Hence we can repeat the above argument. This
        shows that the distribution $x$ is indeed non-terminating.
      \end{proof}
      Now taking an acyclic orientation $D$ of $G$ and setting
      $x(v) = d^-_D(v)$ for each $v \in V(G)$ we have a distribution
      with $|x| = |E(G)|$ that is non-terminating from the lemma.
      This shows that $\dist(\mathbf{0}_G)\leq |E(G)|$.

      For proving $\dist(\mathbf{0}_G)\geq |E(G)|$, take a non-terminating distribution
      $x \in \Chip(G)$. It is enough to show that $|x|\geq |E(G)|$.
      Since in a non-terminating game every vertex is fired infinitely often
      (see \cite[Lemma 2.1]{BLS91}), after finitely many firings,
      every vertex of $G$ has been fired at least once.
      Let $x'$ be the distribution at such a moment.
      Then $|x| = |x'|$.
      Let $D$ be the orientation of $G$ that we get by
      directing each edge toward the vertex whose last firing occurred
      earlier. It is straightforward to check that
      $x'(v) \ge d^-_D(v)$ for each $v \in V(G)$.
      This fact implies that $|x| = |x'| \ge |E(G)|$,
      completing the proof.
    \end{proof}

    Now let us consider Eulerian digraphs. 

    \begin{thm}
      \label{thm::dist = minfas}
      Let $D$ be an Eulerian digraph.
      Then $\dist(\mathbf{0}_D) = \minfas(D)$.
    \end{thm}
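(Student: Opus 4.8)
The plan is to prove the two inequalities $\dist(\mathbf{0}_D)\le\minfas(D)$ and $\dist(\mathbf{0}_D)\ge\minfas(D)$ separately, imitating the undirected argument above but replacing acyclic orientations by feedback arc sets and exploiting the Eulerian identity $L\mathbf{1}_D=\mathbf{0}_D$, which holds precisely because $d^+(v)=d^-(v)$ for every $v$. For the upper bound I would fix a minimum feedback arc set $F$, so that $D\setminus F$ is acyclic, and choose a topological order $v_1,\dots,v_n$ of $D\setminus F$; every edge of $D\setminus F$ then points forward, so every backward edge lies in $F$. Define $x\in\Chip(D)$ by letting $x(v_i)$ be the number of edges of $F$ with head $v_i$, so that $|x|=|F|=\minfas(D)$. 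I claim firing $v_1,\dots,v_n$ in this order is a legal game: when $v_i$ is about to be fired it has already received one chip along each forward in-edge (those from $v_1,\dots,v_{i-1}$), that is $d^-(v_i)$ minus the number of its backward in-edges, and since those backward in-edges all lie in $F$ we have $x(v_i)\ge$ (number of backward in-edges of $v_i$); hence $v_i$ currently holds at least $d^-(v_i)=d^+(v_i)$ chips and is active. After all $n$ firings the net change is $L\mathbf{1}_D=\mathbf{0}_D$, so we return to $x$ and may repeat forever; thus $x$ is non-terminating and $\dist(\mathbf{0}_D)\le\minfas(D)$.

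For the lower bound take any non-terminating $x\in\Chip(D)$; it suffices to produce a feedback arc set $F$ with $|F|\le|x|$, since then $|x|\ge|F|\ge\minfas(D)$. First I would prove the directed analogue of the fact that every vertex fires infinitely often: let $S$ be the set of vertices fired infinitely often in a fixed infinite legal game. As chip counts stay bounded by $|x|$, no vertex outside $S$ can receive chips infinitely often, so there is no edge from $S$ to $V(D)\setminus S$; the Eulerian property forces the number of edges from $V(D)\setminus S$ to $S$ to equal the number from $S$ to $V(D)\setminus S$, hence also zero, and weak connectedness gives $S=V(D)$. Consequently there is a finite time $T$ by which every vertex has been fired at least once; let $x'$ be the distribution at that moment, so $|x'|=|x|$. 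For each $v$ let $\ell(v)$ be the step of its last firing within $[1,T]$; these values are distinct, and taking $F$ to be the set of edges $a\to b$ with $\ell(a)>\ell(b)$ (the backward edges with respect to the order induced by $\ell$) yields a feedback arc set, since deleting them leaves only forward edges, which are acyclic. After a vertex $v$ fires for the last time it never fires again up to $T$, so it only gains chips afterwards, and each backward in-edge $a\to v$ delivers at least one such chip because $a$ fires after $\ell(v)$; hence $x'(v)\ge$ (number of backward in-edges of $v$). Summing over all $v$ gives $|x|=|x'|\ge|F|$, as required.

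The step I expect to require the most care is the claim that in a non-terminating game on an Eulerian digraph every vertex is fired infinitely often: this is exactly where the Eulerian hypothesis is indispensable, as it fails for general digraphs, and its proof must combine the boundedness of chip counts, the balance of in- and out-edges across an arbitrary vertex subset, and weak connectivity. The remaining bookkeeping—verifying that each vertex is active when fired in the topological order for the upper bound, and counting the chips a vertex accumulates after its last firing for the lower bound—is routine once the relevant ordering and feedback arc set are fixed.
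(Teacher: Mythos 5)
Your proof is correct, and it splits into one direction that mirrors the paper and one that takes a genuinely different route. The lower bound is essentially the paper's own argument: order the vertices by the time of their last firing, take the backward edges as the feedback arc set $F$, and observe that each vertex has accumulated at least $d^-_F(v)$ chips since its last firing. You even go further than the paper here, proving rather than citing (Proposition \ref{prop::euler_vegt_mind_vegt_sokszor_lo}, from \cite{BL92}) the fact that a non-terminating game on an Eulerian digraph fires every vertex infinitely often; your cut-counting proof of that step is sound, though you should remark that $S\neq\emptyset$ by pigeonhole before invoking weak connectivity. The upper bound is where you diverge. The paper certifies that $x(v)=d^-_F(v)$ is non-terminating via the Perrot--Pham invariant lemma (Lemma \ref{lemma::euler_ir_felett_vegtelen}): minimality of $F$ yields a vertex $v_0$ whose incident edges in $F$ are exactly its in-edges; one fires $v_0$ and replaces $F$ by the feedback arc set obtained by swapping the in-edges of $v_0$ for its out-edges, restoring the invariant $x\ge d^-_F$ pointwise, and iterates forever. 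You instead exhibit an explicitly periodic legal game: fire all vertices once in a topological order of $D\setminus F$ (the initial $d^-_F(v_i)$ chips cover the backward in-edges, chips received from earlier firings cover the forward ones), and the Eulerian identity $L\mathbf{1}_D=\mathbf{0}_D$ returns the distribution to $x$, so the round repeats. Your version is more elementary and slightly more general --- it never uses minimality of $F$, so it shows $d^-_F$ is non-terminating for an \emph{arbitrary} feedback arc set --- whereas the paper's invariant-maintenance formulation, which tracks how the feedback arc set evolves under firing, is the form that carries over from the abelian sandpile setting of \cite{Perrot} and reappears in the paper's later use of Proposition \ref{prop::euler_ha mar mindenki lott}. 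Both arguments implicitly rely on Theorem \ref{thm::vegesseg_kommutativ} to pass from the existence of one infinite legal game to non-termination of the distribution, which is legitimate.
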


    This theorem is already stated in a note added in 
    proof of \cite{BL92}, but there only the direction 
    $\dist(\mathbf{0}_D) \ge \minfas(D)$ is proved.
    We give a proof following ideas of Perrot and Pham \cite{Perrot}.
    The idea of the proof can be 
    thought of as the generalization of the idea of the proof of 
    Theorem \ref{thm::also_korlat}.
    For proving Theorem \ref{thm::dist = minfas} we need a classical result about chip-firing on an Eulerian digraph:

    \begin{prop}[{\cite[Lemma 2.1]{BL92}}] \label{prop::euler_vegt_mind_vegt_sokszor_lo}
      On an Eulerian digraph $D$ if a chip-distribution is non-terminating then in any legal game every vertex is fired infinitely often.
    \end{prop}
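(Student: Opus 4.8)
The plan is to run a conservation-and-counting argument on the set of vertices that fire infinitely often. I would fix any infinite legal game starting from the non-terminating distribution $x$; such a game exists because, by the definition of non-terminating, every legal game can be continued indefinitely. Since $V(D)$ is finite but the game performs infinitely many firings, at least one vertex is fired infinitely often. Let $S \subseteq V(D)$ be the set of all such vertices, so $S \neq \emptyset$. The goal is to show $S = V(D)$, so I would suppose for contradiction that $V(D) \setminus S \neq \emptyset$. Because every vertex outside $S$ is fired only finitely many times, there is a finite time $T$ after which no vertex of $V(D) \setminus S$ is ever fired again.

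First I would rule out edges directed from $S$ into $V(D) \setminus S$. Suppose there were an edge $(u,w) \in E(D)$ with $u \in S$ and $w \in V(D) \setminus S$. After time $T$ the vertex $w$ is never fired, so its chip count $x_t(w)$ is non-decreasing for $t > T$; moreover $u$ fires infinitely often, and each firing of $u$ sends at least one chip along $(u,w)$ to $w$, so $x_t(w) \to \infty$. This contradicts the fact that a firing preserves the total number of chips (recall $|x + L\mathbf{1}_v| = |x|$), which forces $x_t(w) \le |x|$ for every $t$ since all coordinates are nonnegative. Hence no edge leaves $S$, so every edge with tail in $S$ also has its head in $S$. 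Writing $e(A,B)$ for the number of edges with tail in $A$ and head in $B$, this reads $\sum_{v \in S} d^+(v) = e(S,S)$.

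The key step is then to invoke the Eulerian hypothesis. Counting edges with head in $S$ gives $\sum_{v \in S} d^-(v) = e(S,S) + e(V(D) \setminus S,\, S)$. Summing the identity $d^+(v) = d^-(v)$ over $v \in S$ yields $\sum_{v \in S} d^+(v) = \sum_{v \in S} d^-(v)$, and combining the three displayed equalities forces $e(V(D) \setminus S,\, S) = 0$. Thus there is no edge between $S$ and $V(D) \setminus S$ in either direction. Since $D$ is weakly connected and both $S$ and $V(D) \setminus S$ are nonempty, this is a contradiction. Therefore $S = V(D)$, i.e.\ every vertex is fired infinitely often in the chosen game; as the game was an arbitrary infinite legal game, the statement follows for any legal game.

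I expect the main obstacle to be the second step: extracting from the purely local Eulerian balance $d^+(v) = d^-(v)$ the global conclusion that no edge can enter $S$ either. The outgoing-edge half is immediate from chip conservation, but it is precisely the degree-counting identity together with the Eulerian condition that upgrades ``$S$ is closed under outgoing edges'' to ``$S$ is disconnected from its complement,'' which is where weak connectivity delivers the contradiction. The one point demanding care is bookkeeping with multiplicities in the quantities $e(A,B)$, so that the counting identities remain exact for digraphs with multiple arcs.
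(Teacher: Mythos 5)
Your proof is correct. Note that the paper itself does not prove this proposition: it is imported verbatim as \cite[Lemma 2.1]{BL92}, so there is no in-paper argument to compare against, and a self-contained proof like yours is exactly what filling this citation requires. Your two steps are both sound: chip conservation plus nonnegativity of legal distributions gives the bound $x_t(w)\le |x|$, which rules out edges leaving the set $S$ of infinitely-firing vertices (each firing of $u\in S$ adds $\overrightarrow{d}(u,w)\ge 1$ chips to a $w$ that never fires after time $T$); and the degree count $\sum_{v\in S}d^+(v)=e(S,S)$, $\sum_{v\in S}d^-(v)=e(S,S)+e(V(D)\setminus S,S)$ together with the Eulerian identity forces $e(V(D)\setminus S,S)=0$, contradicting weak connectivity. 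Your multiplicity bookkeeping is exact as stated. One remark: your second step is really an inline proof of the standard fact that a weakly connected Eulerian digraph is strongly connected; if you invoke that fact directly, the argument shortens to ``$S$ is closed under out-edges, and strong connectivity then gives $S=V(D)$,'' which is essentially the shape of the original argument in \cite{BL92}. A second, cosmetic point: since a ``legal game'' may be a finite sequence, the statement should be read as concerning games continued indefinitely; your restriction to infinite legal games (which exist, and are the only maximal ones, since $x$ is non-terminating by Theorem \ref{thm::vegesseg_kommutativ}) handles this correctly.
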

    \begin{proof}[Proof of Theorem \ref{thm::dist = minfas}]
      The key lemma is the following observation of Perrot and Pham; they proved it for the recurrent configurations of the abelian sandpile model, but the two models are very closely related.

      \begin{lemma}[{\cite{Perrot}}] \label{lemma::euler_ir_felett_vegtelen}
        Let $F\subseteq E(D)$ be a minimum cardinality feedback arc set. Denote by $d^+_F(v)$ and $d^-_F(v)$ the outdegree and indegree of a vertex $v$ in the digraph $D_F=(V(D), F)$.
        Then a distribution $x\in \Chip(D)$ satisfying
        \begin{equation}
          \label{eq::feedback_felett}
          x(v)\geq d^-_F(v) \text{ for every $v\in V(D)$}
        \end{equation}
        is non-terminating.
      \end{lemma}
      \begin{proof}
        First we prove that if $F$ is a minimum cardinality
        feedback arc set then there exists a vertex $v \in V(D)$ such
        that among the edges incident to $v$, $F$ contains exactly
        the in-edges of $v$.

        Let $A = E(D) \setminus F$. From the definition of feedback
        arc set, $D_A = (V(D), A)$ is an acyclic graph.
        Therefore, it has a source $v_0$.
        We claim that no out-edge of $v_0$ is in $F$.
        Indeed, if some out-edges of $v_0$ would be in $F$, removing
        them from $F$ would mean adding some out-edges to the source $v_0$ in
        $D_A$, which cannot create a cycle. So we could get a smaller
        feedback arc set.

        The fact that $v_0$ is a source in $D_A$ means that all the
        in-edges of $v_0$ are in $F$. Hence from the edges
        incident to $v_0$, $F$ contains exactly the in-edges of
        $v_0$.

        Now take such a vertex $v_0$.
        From \eqref{eq::feedback_felett}, the choice of $v_0$ and
        the fact that $D$ is Eulerian, we have that
        $x(v_0)\geq d^-_F(v_0)=d^-(v_0)=d^+(v_0)$,
        therefore $v_0$ is active with respect to $x$.
        Fire $v_0$. Let $x'$ be the resulting distribution.
        We show, that we can modify the feedback arc set $F$, such
        that for the new feedback arc set $F'$ we have
        $x'(v)\geq d^-_{F'}(v)$ for every $v\in V(D)$.

        Let $F'$ be the set of arcs obtained from $F$ by removing the
        in-edges of $v_0$ and adding the out-edges of $v_0$
        (see Figure \ref{fig::feedback}).
        Then $D_{A'} = (V(D), A')$ with $A'=E(D)\setminus F'$ is
        acyclic, since the new edges are all incident to a sink in
        $D_{A'}$.
        Moreover, since $D$ is Eulerian, and from the choice of $v_0$, we have $|F'|=|F|$, hence
        $F'$ is also a feedback arc set of minimum cardinality.
        It is straightforward to check that indeed
        $x'(v)\geq d^-_{F'}(v)$ for every $v \in V(D)$.

        So we are again in the starting situation, which shows that
        $x$ is indeed non-terminating.
       \end{proof}
\begin{figure}[ht]
\begin{center}
\begin{tikzpicture}[->,>=stealth',auto,scale=2.3,
                    thick,every node/.style={circle,draw,font=\sffamily\small}]

  \node[label=left:0] (1) at (0, 1) {};
  \node[label=left:0] (2) at (0, -1) {};
  \node[label=below:0] (3) at (-1, 0) {};
  \node[label=below:2] (4) at (1, 0) {};
  \path[every node/.style={font=\sffamily\small}]
    (1) edge [dashed] node {} (4)
    (3) edge node {} (1)
    (2) edge node {} (3)
    (4) edge node {} (2)
    (4) edge [bend left=20] node {} (3)
    (3) edge [dashed,bend left=20] node {} (4);
\end{tikzpicture}
\hspace{0.2cm}
\begin{tikzpicture}[->,>=stealth',auto,scale=2.3,
                    thick,every node/.style={circle,draw,font=\sffamily\small}]

  \node[label=left:0] (1) at (0, 1) {};
  \node[label=left:1] (2) at (0, -1) {};
  \node[label=below:1] (3) at (-1, 0) {};
  \node[label=below:0] (4) at (1, 0) {};
  \path[every node/.style={font=\sffamily\small}]
    (1) edge node {} (4)
    (3) edge node {} (1)
    (2) edge node {} (3)
    (4) edge [dashed] node {} (2)
    (4) edge [dashed,bend left=20] node {} (3)
    (3) edge [bend left=20] node {} (4);
\end{tikzpicture}
\caption{An example for simultaneously firing a vertex and changing the feedback arc set. The arcs of the feedback arc sets are drawn by dashed lines.}
\label{fig::feedback}
\end{center}
\end{figure}
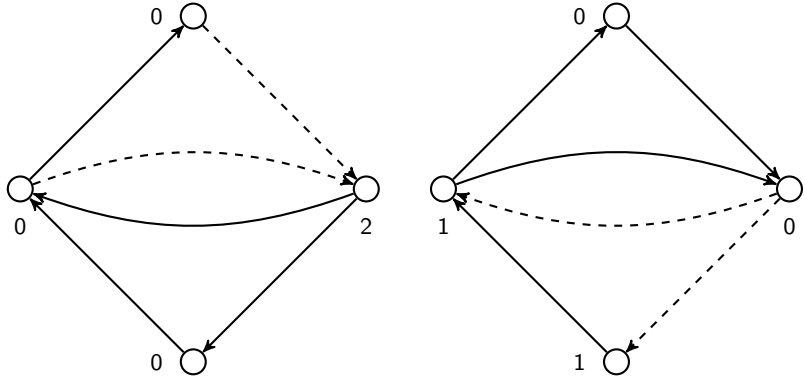

         Now take a feedback arc set $F$ of minimum cardinality,
         and let $x(v)=d^-_F(v)$ for every $v\in V(D)$.
         Then $|x|=|F|=\minfas(D)$, and from the lemma, $x$ is non-terminating.
         This proves that $\dist(\mathbf{0}_D)\leq \minfas(D)$.

         The direction $\dist(\mathbf{0}_D)\geq \minfas(D)$ is shown in the note 
         added in proof of \cite{BL92} for a general digraph, 
         however, as we need implications of its idea, we also include this part of the proof.         
         
         Take a non-terminating
         distribution $x$. It is enough to prove that $|x|\geq \minfas(D)$.

         Let us play a chip-firing game with initial distribution $x$.
         Proposition \ref{prop::euler_vegt_mind_vegt_sokszor_lo} says that after finitely many steps, every vertex
         has fired. Play until such a moment, and let the distribution at that moment be $x'$.

         Let $A$ be the following set of edges:
         $$
         A=\{(u,v) \in E(D): \textrm{ the last firing of $u$ preceeds the last firing of $v$}\}.
         $$
         As every vertex has fired, $A$ is well defined.
         Let $v_1,v_2, \dots v_{|V(D)|}$ be the ordering of the vertices by the time of their last firing.
         Then $v_1,v_2, \dots v_{|V(D)|}$ is a topological order of $D_A=(V(D),A)$, so $D_A$ is acyclic, hence $F=E(D)\setminus A$ is a feedback arc set.
         We show that $x'(v)\geq d^-_F(v)$ for every $v\in V(D)$.
         For $1 \le i \le |V(D)|$ the vertex $v_i$ has $d^-_F(v_i)=\sum_{j>i}\overrightarrow{d}(v_j,v_i)$.
         After its last firing, $v_i$ had a nonnegative number of chips.
         Since then, it kept all chips it received. And as $v_{i+1}, \dots, v_{|V(D)|}$
         all fired since the last firing of $v_i$, it received at least $\sum_{j>i}\overrightarrow{d}(v_j,v_i)=d^-_F(v_i)$ chips. So indeed, we have $x'(v_i)\geq d^-_F(v_i)$.

         Therefore $|x|=|x'| \ge |F|\geq \minfas(D)$.
      \end{proof}

      Note that in the above setting, starting from $x'$, then firing the vertices in the order $v_1,v_2,\dots, v_{|V(D)|}$ (once each) is a legal game. Indeed, we proved that $x'(v_i)\geq d^-_F(v_i)=\sum_{j>i}\overrightarrow{d}(v_j,v_i)$. After firing $v_1, \dots v_{i-1}$, the vertex $v_i$ receives $\sum_{j<i}\overrightarrow{d}(v_j,v_i)$ more chips, so it indeed becomes active
      ($d^+(v_i)=d^-(v_i)=\sum_{j\neq i}\overrightarrow{d}(v_j,v_i))$ as we did not allow loops).

      We need this observation in the next section, so we state it as a proposition:

      \begin{prop} \label{prop::euler_ha mar mindenki lott}
        In a chip-firing game on an Eulerian digraph $D$, if at some moment every vertex has already fired then there is an order of the vertices in which they can be legally fired once each, starting from that moment.\qed
      \end{prop}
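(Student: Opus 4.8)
The plan is to reuse the vertex ordering that already appears in the proof of Theorem~\ref{thm::dist = minfas}. Suppose we have reached a moment at which every vertex of $D$ has already fired, and let $x'$ be the distribution at that moment. Order the vertices as $v_1, v_2, \dots, v_{|V(D)|}$ by the time of their last firing, so that the last firing of $v_i$ precedes that of $v_j$ whenever $i<j$. Defining $A$ and $F = E(D)\setminus A$ exactly as in that proof, the listing $v_1,\dots,v_{|V(D)|}$ is a topological order of $D_A$, and the argument there (which uses only that every vertex has fired and that each vertex retains the chips received after its own last firing) gives $x'(v_i)\geq d^-_F(v_i)=\sum_{j>i}\overrightarrow{d}(v_j,v_i)$ for every $i$. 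These facts hold verbatim under the present hypothesis, so I would simply invoke them.

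The claim to establish is then that firing $v_1, v_2, \dots, v_{|V(D)|}$ once each, in this order and starting from $x'$, is a legal game. To check this I would track the number of chips on $v_i$ at the instant it is about to be fired. At that point exactly $v_1,\dots,v_{i-1}$ have been fired, so $v_i$ has gained $\sum_{j<i}\overrightarrow{d}(v_j,v_i)$ chips on top of its count in $x'$ and has lost none (it has not yet fired). Hence its chip count is at least
\[
  d^-_F(v_i)+\sum_{j<i}\overrightarrow{d}(v_j,v_i)
  =\sum_{j>i}\overrightarrow{d}(v_j,v_i)+\sum_{j<i}\overrightarrow{d}(v_j,v_i)
  =\sum_{j\neq i}\overrightarrow{d}(v_j,v_i)=d^-(v_i).
\]
Since $D$ is Eulerian, $d^-(v_i)=d^+(v_i)$, so $v_i$ is active and its firing is legal.

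The only real point is the splitting of the in-neighbours of $v_i$ according to whether they come earlier or later in the ordering: the ``later'' contribution is guaranteed by the surplus $d^-_F(v_i)$ inherited from $x'$, while the ``earlier'' contribution is produced by the firings already performed in the new game. The Eulerian condition is precisely what forces these two amounts to sum to the outdegree $d^+(v_i)$, ensuring no step gets stuck. I expect this bookkeeping to be the main (albeit mild) obstacle; the rest is a direct restatement of facts already proved for Theorem~\ref{thm::dist = minfas}.
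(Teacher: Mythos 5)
Your proposal is correct and follows essentially the same route as the paper: the paper's own justification (given immediately before the proposition) orders the vertices by the time of their last firing, invokes the inequality $x'(v_i)\geq d^-_F(v_i)=\sum_{j>i}\overrightarrow{d}(v_j,v_i)$ established in the proof of Theorem~\ref{thm::dist = minfas}, and then uses the Eulerian identity $d^+(v_i)=d^-(v_i)=\sum_{j\neq i}\overrightarrow{d}(v_j,v_i)$ exactly as you do. Your observation that the key inequality depends only on every vertex having fired (not on minimality or non-termination of the initial distribution) is also the correct and needed reading of that argument.
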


      It is worth noting that on an Eulerian digraph if starting from an initial
      distribution $x$ we fired each vertex exactly once, then we get back to distribution $x$:
      each vertex $v$ gave and received $d^-(v)=d^+(v)$ chips.

      Finally, we prove Theorem \ref{thm::rang_eulerben_nehez}.

      \begin{proof}[Proof of Theorem \ref{thm::rang_eulerben_nehez}]
        Perrot and Pham proved that computing $\minfas(D)$ for a 
        simple Eulerian digraph $D$ is \NP-hard \cite[Theorem 2]{Perrot}, 
        by reducing it to the \NP-hardness of computing $\minfas(D)$ 
        for general digraphs.
        From this, and from Theorem \ref{thm::dist = minfas}, the 
        statement follows.
      \end{proof}

\section{The distance from non-terminating distributions is {\NP}-hard on graphs}
  \label{sec::rank_chip_dist_NP_hard}
  In this section we prove Theorem \ref{thm::chip_rang_NP-teljes}, the main theorem of this article.
  In our proof of the \NP-hardness, we rely on the fact
  that a terminating chip-firing game on an Eulerian digraph $D$
  terminates after at most $2|V(D)|^2 |E(D)| \Delta(D)$ steps
  (see \cite[Corollary 4.9]{BL92}), where $\Delta(D)$ denotes 
  the maximum of all the indegrees and the outdegrees of $D$, i.e.,
  $\Delta(D) = \max_{v \in V(D)}\max\{d^-(v), d^+(v)\}$.
  With this in mind, we define the following transformation:

  \begin{definition}
    Let $\varphi$ be the following transformation, assigning an undirected graph $G=\varphi(D)$ to any digraph $D$:

    Split each directed edge by an inner point, and substitute the tail segment by $M=8|V(D)|^2 |E(D)| \Delta(D)$ parallel edges. Then forget the orientations.

    We maintain the effect of the transformation by a bijective function $\psi: (V(D)\cup E(D)) \to V(\varphi(D))$:

    For a vertex $v\in V(D)$ let $\psi(v)$ be the corresponding vertex of $\varphi(D)$. For an edge $e \in E(D)$, let $\psi(e)$ be the vertex with which we have split $e$.
  \end{definition}

  Then the degrees in $\varphi(D)$ are the following:
  \begin{equation*}
      d(v) = \left\{\begin{array}{cl} d^+\left(\psi^{-1}(v)\right)\cdot M + d^-(\psi^{-1}(v)) & \text{if } \psi^{-1}(v) \in V(D) \\
        M + 1 & \text{if } \psi^{-1}(v) \in E(D).
      \end{array} \right.
  \end{equation*}

\begin{figure}[ht]
\begin{center}
\begin{tikzpicture}[->,>=stealth',auto,scale=2.5,
                    thick,every node/.style={circle,draw,font=\sffamily\small}]

  \node (1) at (0, 1) {\small $v_1$};
  \node (2) at (0, -1) {\small $v_2$};
  \node (3) at (-1, 0) {\small $v_3$};
  \node (4) at (1, 0) {\small $v_4$};
  \path[every node/.style={font=\sffamily\small}]
    (1) edge node {$e_1$} (4)
    (3) edge node {$e_2$} (1)
    (2) edge node {$e_3$} (3)
    (4) edge node {$e_4$} (2)
    (4) edge [bend left=20] node {$e_5$} (3)
    (3) edge [bend left=20] node {$e_6$} (4);
\end{tikzpicture}
\begin{tikzpicture}[auto,scale=2.5,
                    thick,every node/.style={circle,draw,font=\sffamily\small}]

  \node (1) at (0, 1) {\tiny $\psi(v_1)$};
  \node (2) at (0, -1) {\tiny $\psi(v_2)$};
  \node (3) at (-1, 0) {\tiny $\psi(v_3)$};
  \node (4) at (1, 0) {\tiny $\psi(v_4)$};
  \node (41) at (0.5, 0.5) {\tiny $\psi(e_1)$};
  \node (13) at (-0.5, 0.5) {\tiny $\psi(e_2)$};
  \node (32) at (-0.5, -0.5) {\tiny $\psi(e_3)$};
  \node (24) at (0.5, -0.5) {\tiny $\psi(e_4)$};
  \node (34) at (0, -0.25) {\tiny $\psi(e_5)$};
  \node (43) at (0, 0.25) {\tiny $\psi(e_6)$};

  \path[every node/.style={font=\sffamily\small}]
    (4) edge (41)
    (41) edge [bend right=10] (1)
    (41) edge [bend left=10] (1)
    (41) edge (1)
    (1) edge (13)
    (13) edge [bend right=10] (3)
    (13) edge [bend left=10] (3)
    (13) edge (3)
    (3) edge (32)
    (32) edge [bend right=10] (2)
    (32) edge [bend left=10] (2)
    (32) edge (2)
    (2) edge (24)
    (24) edge [bend right=10] (4)
    (24) edge [bend left=10] (4)
    (24) edge (4)
    (3) edge (34)
    (34) edge [bend right=10] (4)
    (34) edge [bend left=10] (4)
    (34) edge (4)
    (4) edge (43)
    (43) edge [bend right=10] (3)
    (43) edge [bend left=10] (3)
    (43) edge (3);
\end{tikzpicture}
\caption{A schematic picture for a digraph $D$ and the corresponding $\varphi(D)$. In the reality the multiple edges should be 1536-fold.}
\end{center}
\end{figure}
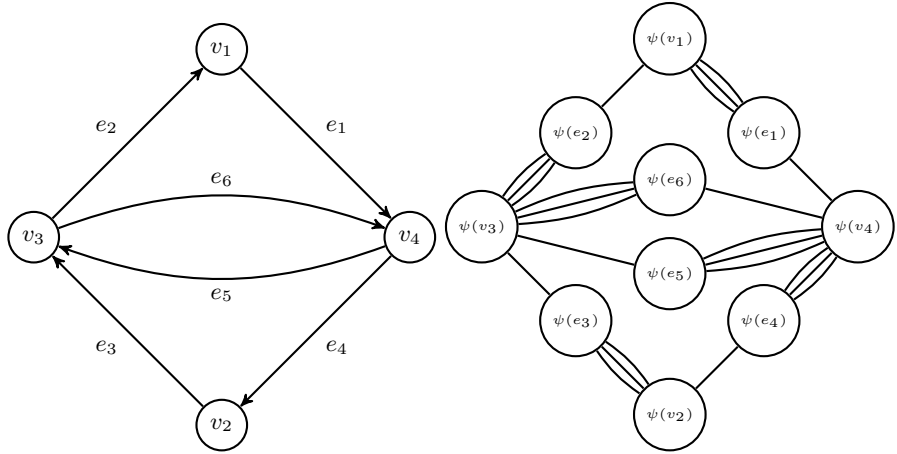

  Let us define a certain chip-distribution on the graph $\varphi(D)$:

  \begin{definition}[base-distribution]
    Let $base_D \in \Chip(\varphi(D))$ on a vertex $v \in V(\varphi(D))$ be the following:
    \begin{equation*}
    \begin{split}
      base_D(v) = \left\{\begin{array}{cl} d^+\left(\psi^{-1}(v)\right)\cdot M & \text{if } \psi^{-1}(v) \in V(D) \\
        M/2 & \text{if } \psi^{-1}(v) \in E(D).
      \end{array} \right.
    \end{split}
    \end{equation*}
  \end{definition}

  The key lemma in our proof of Theorem \ref{thm::chip_rang_NP-teljes} is the following:
  \begin{lemma}
    For an Eulerian digraph $D$, $\dist_D(\mathbf{0}_D) = \dist_{\varphi(D)}(base_D)$.
  \end{lemma}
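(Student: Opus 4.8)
The plan is to prove the two inequalities $\dist_{\varphi(D)}(base_D)\le \dist_D(\mathbf{0}_D)$ and $\dist_{\varphi(D)}(base_D)\ge \dist_D(\mathbf{0}_D)$ by simulation in both directions. The structural fact driving everything is that, because $D$ is Eulerian ($d^+(u)=d^-(u)$), a vertex $\psi(u)$ of $\varphi(D)$ is active exactly when $u$ is active in $D$: at $base_D$ together with a configuration $z$ placed on the vertices $\psi(u)$ ($u\in V(D)$), one has $\chi(\psi(u))=d^+(u)M+z(u)$, which reaches the degree $d^+(u)M+d^-(u)$ precisely when $z(u)\ge d^-(u)=d^+(u)$. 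The second ingredient is that each tail segment is an $M$-fold bundle while each edge-vertex $\psi(e)$ carries only $M/2$ chips at $base_D$, well below its degree $M+1$; hence $\psi(e)$ can fire only \emph{after} its tail $\psi(\mathrm{tail}(e))$ has fired and pushed $M$ chips onto it. Thus the natural unit of play in $\varphi(D)$ is a \emph{macro-move}: fire $\psi(u)$ and then fire all out-edge-vertices of $u$. A direct computation (using $d^+(u)=d^-(u)$) shows a macro-move leaves $\psi(u)$ with $d^+(u)$ fewer chips, adds $\overrightarrow{d}(u,w)$ to each $\psi(w)$, and perturbs the edge-vertices of $u$ only by $\pm 1$ -- i.e.\ on the vertex-coordinates it reproduces exactly the firing of $u$ in $D$.

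\textbf{The direction $\le$.} I would take a non-terminating $y\in\Chip(D)$ with $|y|=\dist_D(\mathbf{0}_D)$. By Proposition~\ref{prop::euler_vegt_mind_vegt_sokszor_lo} every vertex fires, so after finitely many steps we reach a distribution $x'$ (with $|x'|=|y|$) from which, by Proposition~\ref{prop::euler_ha mar mindenki lott} and the remark that firing every vertex of an Eulerian digraph once returns to the same distribution, one can play \emph{balanced rounds} (fire each vertex once, returning to $x'$) indefinitely. On $\varphi(D)$ I start from $base_D+\sum_{u}x'(u)\mathbf{1}_{\psi(u)}$, whose extra chips total $|x'|=\dist_D(\mathbf{0}_D)$, and simulate each $D$-firing by its macro-move in the order dictated by a balanced round. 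The activity equivalence above guarantees each $\psi(u)$ is legally fireable exactly when $u$ is; the edge-vertices stay in $[M/2-1,\,3M/2+1]$ during a round and return to $M/2$ at its end, so every out-edge-vertex has $\ge M+1$ chips when fired and nothing goes negative. This produces an infinite legal game, so the distribution is non-terminating and $\dist_{\varphi(D)}(base_D)\le \dist_D(\mathbf{0}_D)$.

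\textbf{The direction $\ge$.} Here I take a non-terminating $base_D+w$ with $|w|=\dist_{\varphi(D)}(base_D)=:k'$; since $\minfas(D)=\dist_D(\mathbf{0}_D)\le|E(D)|<M/2$, I may assume $k'<M/2$, which forces $w(\psi(e))\le M/2$ on every edge-vertex, so no edge-vertex can fire before its tail. Writing the chip count of $\psi(e)$ (for $e=(u,v)$) in terms of the firing numbers as $\chi(\psi(e))=base_e+w(\psi(e))+M N_{\psi(u)}+N_{\psi(v)}-(M+1)N_{\psi(e)}$, one checks that the play is eventually periodic and, by a harmonicity argument on the connected underlying graph of $D$, that every vertex fires equally often per period; consequently there is a moment at which all firing numbers coincide. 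At such a moment the displayed identity gives $\chi(\psi(e))=base_e+w(\psi(e))$ exactly, so $z(u):=\chi(\psi(u))-d^+(u)M$ satisfies $|z|=\sum_u w(\psi(u))\le k'$. Using the legality of the subsequent balanced round in $\varphi(D)$ and the activity equivalence, the order in which the $\psi(u)$ fire is a legal game on $D$ from $z$ that fires every vertex once and returns to $z$; in particular $z\ge 0$ and $z$ is non-terminating on $D$, whence $\dist_D(\mathbf{0}_D)\le|z|\le k'$.

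\textbf{The main obstacle.} The forward direction is routine once the macro-move is set up; the real difficulty is the reverse. The delicate point is controlling the edge-vertices in an \emph{arbitrary} non-terminating game on $\varphi(D)$: I must locate a configuration at which all edge-vertices have returned exactly to $base_e+w(\psi(e))$ (equivalently, a moment at which all firing numbers agree, so that $\sum_e(N_{\psi(\mathrm{tail}(e))}-N_{\psi(e)})\ge 0$). Making the lockstep between each $\psi(e)$ and its tail rigorous, and guaranteeing such a read-off moment exists, is exactly where the choice $M=8|V(D)|^2|E(D)|\Delta(D)$ -- four times the Eulerian termination-time bound of \cite[Corollary 4.9]{BL92} -- is essential: the buffer $M/2$ exceeds the length of any terminating digraph game, so the undirected game cannot outrun the digraph simulation within the relevant window. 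I expect verifying this synchronization to be the technical heart of the argument.
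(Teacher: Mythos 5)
Your first inequality, $\dist_{\varphi(D)}(base_D)\le\dist_D(\mathbf{0}_D)$, is correct and is essentially the paper's own argument: normalize a minimal non-terminating distribution on $D$ so that a balanced round is available, add it to $base_D$ on the vertex-type vertices, and simulate each firing of $D$ by the macro-move ``fire $\psi(v)$, then every $\psi(e)$ for the out-edges $e$ of $v$''. Your activity equivalence and the bounds on the edge-vertices are exactly the paper's induction invariants \eqref{eq::vertex} and \eqref{eq::edge}, so this half is fine.

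The reverse inequality, however, contains a genuine gap, precisely at the step you yourself flag as the ``technical heart''. From eventual periodicity and the fact that per period every vertex fires equally often, you conclude that ``there is a moment at which all firing numbers coincide''. This inference is invalid. Periodicity (which you may arrange, since the state space with a fixed chip total is finite) together with the fact that the kernel of the Laplacian of the connected graph $\varphi(D)$ is spanned by $\mathbf{1}$ gives $N(t+T)=N(t)+c\mathbf{1}$ in the periodic regime; but this \emph{freezes} the pairwise differences of the firing numbers at whatever values they had when the game entered the cycle. All firing numbers are equal at a positive time if and only if the initial distribution itself recurs, and that can fail: on a triangle with distribution $(4,0,0)$, every legal game is forced to fire the first vertex twice and thereafter visits only the distributions $(2,1,1)$, $(0,2,2)$, $(1,0,3)$, $(1,3,0)$, so $(4,0,0)$ never recurs and the firing numbers are never again all equal. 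Without such a read-off moment, your identity for $\chi(\psi(e))$ gives no bound on $|z|$ (chips may sit permanently above $M/2$ on some edge-vertices and below on others), and your extraction of a legal $D$-game additionally assumes the period is organized into macro-moves, which an arbitrary legal game need not be. The paper's proof needs no recurrence at all: it plays, by its own choice, $\tfrac{3}{8}M$ macro-steps, shows that within this window no edge-vertex can ever reach $M+1$ chips (so at each step the active vertex supplied by non-termination must be of the form $\psi(v)$, and the macro-move is then available), and concludes that the induced legal game on $D$ has length at least $2|V(D)|^2|E(D)|\Delta(D)+1$, exceeding the termination bound of \cite[Corollary 4.9]{BL92} for Eulerian digraphs; hence the read-off distribution on $D$ is non-terminating. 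This finite-window argument is the missing idea, and it is exactly what the size of $M$ is calibrated for.
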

  \begin{proof}
    Let $G = \varphi(D)$.
    First we show that $\dist_D(\mathbf{0}_D) \ge \dist_{G}(base_D)$.

    Let $x \in \Chip(D)$ be a non-terminating chip-distribution such that $|x|$ is minimal.
    We can assume that there is an order
    of the vertices of $D$ such that from initial distribution
    $x$ we can fire the vertices in that order (once each).
    Otherwise, from Proposition \ref{prop::euler_vegt_mind_vegt_sokszor_lo} we can play a
    chip-firing game from $x$ until each vertex has fired. Denoting
    the distribution at that moment by $x'$, from Proposition
    \ref{prop::euler_ha mar mindenki lott} for $x'$ there is such an order. As firing
    does not change the number of chips in the game, $|x'|$ is still minimal, so we can
    substitute $x$ with $x'$.

    Let $y\in\Chip(G)$ be the distribution ``$x + base_D$'', i.e., for a vertex
    $v \in V(D)$ let $y(\psi(v)) = x(v) + base_D(\psi(v))$ and for an
    edge $e \in E(D)$ let $y(\psi(e)) = base_D(\psi(e))$.
    Since $y(w) \ge base_D(w)$ for each $w \in V(G)$ and
    $|y - base_D| = |x| = \dist_D(\mathbf{0}_D)$,
    it is enough to show that $y$ is non-terminating.

    For that, it is enough to show that we can fire each vertex of $G$
    exactly once in some order. Then each vertex $w \in V(G)$ gives and receives $d(w)$ chips,
    so we get back to the distribution $y$ and can repeat this period indefinitely.

    To get such an order of the vertices of $G$, we will play the chip-firing game simultaneously on $D$ and $G$.

    To firing a vertex $v$ in $D$, let the corresponding firings in
    $G$ be: Fire $\psi(v)$, then fire $\psi(e)$ for every out-edge
    $e$ of $v$ (in some order).

    \begin{claim}
      If a sequence of firings of length $k \leq M/2$ on
      $D$ with initial distribution $x$ is legal then the sequence of the
      corresponding firings on $G$ with
      initial distribution $y$ is also legal. Moreover, if we
      denote the resulting distribution on $D$ by $\tilde{x}$ and on $G$ by
      $\tilde{y}$ then
      \begin{equation}
        \label{eq::vertex}
        \tilde{y}(\psi(v)) = \tilde{x}(v) + d^+(v) \cdot M
          \text{ for each $v \in V(D)$}
      \end{equation}
      and
      \begin{equation}
        \label{eq::edge}
        M/2 - k \le \tilde{y}(\psi(e)) \le M/2 + k \text{ for each $e \in E(D)$.}
      \end{equation}
    \end{claim}
    \begin{proof}
      We show this by induction on $k$. For $k = 0$ this is trivial.
      Take a sequence of firings of length $k \leq M/2$ and assume that the
      claim holds for $k - 1$. Denote the distribution on $D$ after
      the first $k - 1$ firings by $x'$ and the corresponding
      distribution on $G$ by $y'$. Assume that the vertex $v$ is the
      last to be fired on $D$. Hence $v$ is active with respect
      to $x'$. Denote the distribution after firing $v$ by
      $x''$. Vertex $\psi(v)$ is active with respect to $y'$,
      since using \eqref{eq::vertex} of the induction hypothesis, the fact that $v$
      is active with respect to $x'$ and that $D$ is Eulerian, we get that
      $y'(\psi(v)) = x'(v) + d^+(v) \cdot M \ge
      d^+(v) + d^+(v) \cdot M =
      d^-(v) + d^+(v) \cdot M = d(\psi(v))$.
      Fire $\psi(v)$. Now for each out-edge $e$ of $v$ the vertex
      $\psi(e)$ is active, since using \eqref{eq::edge} of the induction hypothesis,
      it has at least
      $M + y'(\psi(e)) \ge M + M/2 - (k - 1) \geq M + 1 = d(\psi(e))$ chips.
      Fire these vertices in an arbitrary order.
      (Firing one leaves the others active.)
      Denote by $y''$ the resulting distribution.
      It is easy to check that the distributions $x''$ and $y''$ satisfy
      conditions \eqref{eq::vertex} and \eqref{eq::edge}.
    \end{proof}

    We have chosen the distribution $x$ such that we can fire the vertices
    of $D$ in some order (once each) with initial distribution $x$.
    This is a legal sequence of firings of length $|V(D)|<M/2$.
    According to the previous
    claim, the sequence of the corresponding firings on $G$ is also legal.
    Moreover, on $G$ we also fire each
    vertex exactly once. This finishes the proof of the direction
    $\dist_D(\mathbf{0}_D) \ge \dist_G(base_D)$.

    Now we prove that $\dist_D(\mathbf{0}_D) \le \dist_G(base_D)$.
    For this, let $y \in \Chip(G)$ be a minimal non-terminating
    chip-distribution with $base_D(w) \le y(w)$ for each $w \in V(G)$.
    Let $x(v) = y(\psi(v)) - base_D(\psi(v))$ on each $v\in V(D)$.
    It is enough to show that $x$ is non-terminating.

    First note that $\dist_D(\mathbf{0}_D) \le |E(D)| - |V(D)| + 1$,
    since having a chip-distribu\-ti\-on
    with at least $|E(D)|-|V(D)|+1$ chips, at every stage of the game
    at least one of the vertices has the sufficient
    number of chips to fire.
    Consequently, using also the first part of the lemma, we have
    that $|y - base_D| = \dist_G(base_D) \le \dist_D(\mathbf{0}_D) \le
    |E(D)|-|V(D)| + 1 \le \frac{1}{8}M$.

    Now we play the game on $G$ and $D$ simultaneously
    from initial distributions $y$ and $x$, respectively,
    in the following way. Let a step be the following:
    Choose a vertex $v \in V(D)$ for which $\psi(v)$ can fire.
    On $G$ fire $\psi(v)$, then for every out-edge $e$ of $v$,
    fire $\psi(e)$. On $D$ fire $v$.

    We show that
    for $\frac{3}{8}M \ge
    2|V(D)|^2 |E(D)| \Delta(D) + 1$ steps we
    can play this legally on both graphs.
    Note first that for an edge $e$ of $D$, the change of the
    number of chips on $\psi(e)$ is at most one after each step.
    Hence at the beginning of a step a vertex of $G$ of
    the form $\psi(e)$ can have at most
    $M/2 + |y - base_D| + \frac{3}{8}M \leq
    M/2 + \frac{1}{8}M + \frac{3}{8}M < M + 1 = d(\psi(e))$ chips, so
    it cannot be fired. It also follows from this that on every such
    vertex the number of chips is positive, since it is at least
    $M/2 - \frac{3}{8}M > 0$.
    But $y$ is a non-terminating distribution, hence at the beginning of a
    step we can find an active vertex, which therefore must be of the form $\psi(v)$ with $v \in V(D)$.
    After firing $\psi(v)$, $\psi(e)$ becomes active
    for every out-edge $e$ of $v$, since $\psi(e)$
    had a positive number of chips at the beginning of the step,
    and received $M$ chips. Hence on $G$ we can play in the desired way for $\frac{3}{8}M$ steps.

    For the initial distributions, we have
    $y(\psi(v)) = d^+(v) \cdot M + x(v)$ for each $v\in V(D)$,
    so a vertex $v \in V(D)$ is active with respect to $x$
    if and only if $\psi(v)$ is active with respect to $y$.
    Let $x'$ be the distribution on $D$ and $y'$ the distribution
    on $G$ at the end of an arbitrary (but at most $\frac{3}{8}M^\text{th}$) step.
    Then it can be shown by induction
    that $y'(\psi(v)) = d^+(v) \cdot M + x'(v)$
    for each $v \in V(D)$. So in each step we have that
    a vertex $v \in V(D)$ is active if and only if $\psi(v)$ is active.

    Hence for $\frac{3}{8}M$ steps, the corresponding game on $D$ is also legal.
    This means that there is a chip-firing game of length at least
    $\frac{3}{8}M \ge 2|V(D)|^2 |E(D)| \Delta(D) + 1$ on $D$ with initial distribution $x$, which by
    \cite[Corollary 4.9]{BL92} implies that the distribution $x$ is non-terminating.
    This finishes the proof.
  \end{proof}

  For a general digraph, the construction of the proof imitates the following game: If a vertex $v$ fires, each of its out-neighbors $u$ receives $\overrightarrow{d}(vu)$ chips, but the number of chips on $v$ decreases by the in-degree of $v$. This modification of the chip-firing game has been studied by Asadi and Backman \cite{asadi}.

  \begin{proof}[Proof of Theorem \ref{thm::chip_rang_NP-teljes}]
    The theorem follows from Theorem \ref{thm::rang_eulerben_nehez} 
    and the previous lemma.
  \end{proof}

\begin{remark} \label{rem::dist_NP_erosites}
  For a simple Eulerian digraph $D$, one has $$|E(\varphi(D))|\leq |E(D)|\cdot 9|V(D)|^3|E(D)|\leq 9|V(\varphi(D))|^5,$$ therefore the computation of $\dist$ is \NP-hard even for graphs with $|E(G)|\leq 9|V(G)|^5$.
\end{remark}

\section{Polynomial time computability in a special case}
  In this section we consider undirected graphs, and observe that for
  chip-distributions that are in a sense ``small'', computing the
  distance from non-terminating distributions
  can be done in polynomial time. Moreover, for these distributions,
  the distance from non-terminating distributions only depends on the number of edges of the graph and the
  number of chips in the distribution.
  
  The corollaries of this observation for the case of divisors 
  give a special case of the Riemann-Roch theorem.

  Recall that Theorem \ref{thm::also_korlat} stated that
  $\dist(\mathbf{0}_G)=|E(G)|$ for any
  undirected graph $G$. We would like to generalize this statement for
  ``small enough'' distributions.
  We say that a distribution $x \in \Chip(G)$ is \emph{under an
  acyclic orientation}, if there exists an acyclic orientation $D$ of $G$ such that
  $x(v) \le d^-_D(v)$ for each $v \in V(G)$.

  \begin{prop} \label{prop::acikl_alatt}
    Let $G$ be a graph and let $x \in \Chip(G)$ be a
    distribution. If $x$ is
    under an acyclic orientation then $\dist(x) = |E(G)| - |x|$.
  \end{prop}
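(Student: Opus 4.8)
The plan is to establish the two inequalities $\dist(x) \ge |E(G)| - |x|$ and $\dist(x) \le |E(G)| - |x|$ separately, invoking Theorem \ref{thm::also_korlat} for the former and Lemma \ref{lem::chip_dist_over_acyclic} for the latter.

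For the lower bound $\dist(x) \ge |E(G)| - |x|$, I would first note that this direction needs no hypothesis on $x$. By definition there is some $y \in \Chip(G)$ with $|y| = \dist(x)$ for which $x + y$ is non-terminating. Since $\dist(\mathbf{0}_G) = |E(G)|$ by Theorem \ref{thm::also_korlat}, and $\dist(\mathbf{0}_G)$ is precisely the minimum number of chips in a non-terminating distribution on $G$, every non-terminating distribution has at least $|E(G)|$ chips. Applying this to $x + y$ gives $|x| + |y| = |x + y| \ge |E(G)|$, hence $\dist(x) = |y| \ge |E(G)| - |x|$.

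For the upper bound $\dist(x) \le |E(G)| - |x|$, this is where the hypothesis enters. Let $D$ be an acyclic orientation of $G$ witnessing that $x$ is under an acyclic orientation, so $x(v) \le d^-_D(v)$ for every $v \in V(G)$. Define $y \in \Chip(G)$ by $y(v) = d^-_D(v) - x(v)$; the hypothesis guarantees $y(v) \ge 0$, so $y$ is a genuine chip-distribution. Then $x + y$ assigns $d^-_D(v)$ chips to each vertex $v$, and by Lemma \ref{lem::chip_dist_over_acyclic} this distribution is non-terminating. Since each edge of $G$ contributes exactly once to the indegree sum under the orientation $D$, we have $\sum_{v \in V(G)} d^-_D(v) = |E(G)|$, so $|y| = |E(G)| - |x|$. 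This exhibits a $y$ with $x + y$ non-terminating and $|y| = |E(G)| - |x|$, giving $\dist(x) \le |E(G)| - |x|$. Combining the two bounds yields the claimed equality.

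I do not expect any real obstacle here: the result falls out once one observes that the lower bound is hypothesis-free (it is just the minimality of the all-indegree non-terminating configurations expressed by Theorem \ref{thm::also_korlat}) and that the upper bound is achieved by the explicit distribution $v \mapsto d^-_D(v)$ coming from the witnessing acyclic orientation. The only points warranting a careful check are that the minimizing $y$ indeed makes $x + y$ non-terminating, so that Theorem \ref{thm::also_korlat} can be applied to it, and that $y = d^-_D - x$ is nonnegative, which is exactly what the "under an acyclic orientation" hypothesis provides.
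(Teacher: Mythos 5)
Your proof is correct and follows essentially the same route as the paper: the lower bound comes from Theorem \ref{thm::also_korlat} (every non-terminating distribution has at least $|E(G)|$ chips), and the upper bound from applying Lemma \ref{lem::chip_dist_over_acyclic} to the indegree distribution $v \mapsto d^-_D(v)$ of the witnessing acyclic orientation. The only cosmetic difference is that you name the difference $y = d^-_D - x$ directly, whereas the paper takes $y = d^-_D$ and bounds $\dist(x) \le |y - x|$; these are the same argument.
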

  \begin{proof}
    From Theorem \ref{thm::also_korlat}, a
    non-terminating distribution has at least $|E(G)|$ chips, therefore
    $\dist(x) \ge |E(G)| - |x|$.

    For the other direction, let $D$ be an acyclic orientation of $G$
    with $x(v) \le d^-_D(v)$ for each $v \in V(G)$. Let $y$ be the
    distribution on $G$ corresponding to the indegrees of the
    orientation, i.e., $y(v) = d^-_D(v)$ for each $v \in V(G)$.
    Then, using Lemma \ref{lem::chip_dist_over_acyclic}, $y$ is
    non-terminating, moreover $|y| = |E(G)|$ and $y(v) \ge x(v)$ for
    each $v \in V(G)$.
    Hence $\dist(x) \le |y - x| = |E(G)| - |x|$.
    This completes the proof of the proposition.
  \end{proof}

  \begin{remark}
    It can also be decided in polynomial time whether a distribution
  $x \in \Chip(G)$ is under an acyclic orientation. 
  A greedy algorithm solves the problem.
  \end{remark}


  From the previous proposition, using the duality between 
  chip-distributi\-ons and divisors, we get a special case of the 
  Riemann-Roch theorem for graphs.
  
  Let us denote by $K$ the canonical divisor on a graph $G$, that is,
  $K(v)=d(v)-2$ for each vertex $v\in V(G)$.  
  
  \begin{thm}[Riemann-Roch for graphs, \cite{BN-Riem-Roch}]
   Let $G$ be a graph, and let $f$ be a divisor on $G$. Then
   $$\rank(f)-\rank(K-f)=\deg(f)-|E(G)|+|V(G)|.$$
  \end{thm}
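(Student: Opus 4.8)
The plan is to derive the Riemann-Roch theorem for graphs from Proposition \ref{prop::acikl_alatt} by translating it into the language of divisors via the duality encoded in Proposition \ref{prop::rank_of_dual}. Recall that for a divisor $f$ with $f(v) \le d(v)-1$ for every $v$, the dual pair $x = K^+ - f$ satisfies $\rank(f) = \dist(x) - 1$, where $K^+(v) = d(v)-1$. The key observation I would exploit is that the canonical divisor satisfies $K = K^+ - \mathbf{1}_G$, so the dual pair of $K - f$ is naturally related to that of $f$: a short computation gives that the dual pair of $K-f$ is $K^+ - (K-f) = f + \mathbf{1}_G$. Thus the whole statement should reduce to comparing the distances of two closely related distributions whose total chip counts differ in a controlled way.

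First I would reduce to the case $f(v) \le d(v)-1$ for every $v$ (and symmetrically $(K-f)(v) \le d(v)-1$) so that the dual pairs are honest chip-distributions; this is exactly the regime where the distribution is ``small,'' i.e.\ the borderline case where Proposition \ref{prop::acikl_alatt} applies. The plan is to show that in this regime the dual pair $x = K^+ - f$ is under an acyclic orientation, and likewise for $K-f$. Concretely, an acyclic orientation $D$ of $G$ witnesses $x(v) \le d^-_D(v)$ precisely when one can orient the edges so that each vertex absorbs enough incoming edges; the condition $f(v) \le d(v)-1$ translates to $x(v) \ge 0$, and being under an acyclic orientation corresponds to $\deg(f) \ge |V(G)| - |E(G)|$ holding with the orientation's indegree bounds. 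Then Proposition \ref{prop::acikl_alatt} gives $\dist(x) = |E(G)| - |x|$, and since $|x| = \sum_v (d(v)-1-f(v)) = 2|E(G)| - |V(G)| - \deg(f)$, I would compute
$$\rank(f) = \dist(x) - 1 = |E(G)| - |x| - 1 = \deg(f) - |E(G)| + |V(G)| - 1.$$
The same computation applied to $K-f$, whose degree is $\deg(K-f) = 2|E(G)| - 2|V(G)| - \deg(f)$ (using $\deg(K) = 2|E(G)| - 2|V(G)|$), yields $\rank(K-f) = -1$ in its corresponding regime. Subtracting the two expressions should produce the desired identity $\rank(f) - \rank(K-f) = \deg(f) - |E(G)| + |V(G)|$.

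The main obstacle I anticipate is the case analysis at the two boundaries: the argument above works cleanly only when both $f$ and $K-f$ lie in the range where Proposition \ref{prop::acikl_alatt} applies, i.e.\ when the relevant dual pairs are genuinely under an acyclic orientation. When $\deg(f)$ is very negative one has $\rank(f) = -1$ automatically (no effective equivalent divisor exists), and when $\deg(f)$ is large both ranks are forced by degree considerations; the identity must be checked to persist through these regimes by a separate, easier argument. The delicate point is establishing the precise equivalence between ``$x = K^+ - f$ is under an acyclic orientation'' and the degree condition on $f$; I expect this to hinge on the fact that for an acyclic orientation $D$ one has $\sum_v d^-_D(v) = |E(G)|$, so that the threshold $|x| \le |E(G)|$ is exactly the combinatorial boundary separating the two regimes. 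Verifying that the formula matches across this boundary — rather than computing either rank in isolation — is where the real care is needed, and it is essentially the content that makes this a genuine (if special) instance of Riemann-Roch rather than a triviality.
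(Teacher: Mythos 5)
There is a genuine gap, and it is fundamental rather than technical. First, for context: the paper does not prove this theorem at all --- it is quoted from Baker and Norine \cite{BN-Riem-Roch}, and the paper only uses Propositions \ref{prop::rank_of_dual} and \ref{prop::acikl_alatt} to \emph{verify} the identity in the special case where the dual distribution $x = K^+ - f$ is under an acyclic orientation (there $\rank(f) = \deg(f)-|E(G)|+|V(G)|-1$ and $\rank(K-f)=-1$). Inside that regime your computation agrees with the paper's. The fatal step is your claimed equivalence between ``$K^+ - f$ is under an acyclic orientation'' and a degree condition on $f$: being under an acyclic orientation is a structural property, and the direction you need (degree condition $\Rightarrow$ under an acyclic orientation) is false. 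Take $G=K_3$ with vertices $a,b,c$ and $x = 3\cdot\mathbf{1}_a$, so $f = K^+ - x = (-2,1,1)$ and $\deg(f)=0$. Since $d(a)=2$, no orientation of $G$ (acyclic or not) satisfies $d^-_D(a)\ge 3 = x(a)$, so $x$ is not under an acyclic orientation; and the conclusion of Proposition \ref{prop::acikl_alatt} genuinely fails for it: the only legal move from $x$ is to fire $a$, giving $(1,1,1)$, which is terminating, so $\dist(x)\ge 1 > 0 = |E(G)|-|x|$. Correspondingly $f \sim \mathbf{0}_G$, hence $\rank(f)=0$, while your formula would give $\deg(f)-|E(G)|+|V(G)|-1=-1$.

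The same example shows why the overall strategy --- compute $\rank(f)$ and $\rank(K-f)$ separately and subtract --- cannot be patched. In the range $0\le \deg(f)\le 2g-2$, where $g=|E(G)|-|V(G)|+1$, both ranks can be nonnegative (in the example $\rank(f)=\rank(K-f)=0$; or take $f=\mathbf{0}_G$ on any graph with $g\ge 1$, where $\rank(K-f)=\rank(K)=g-1\ge 0$, contradicting your assertion that $\rank(K-f)=-1$ in its regime), and neither rank is a function of $\deg(f)$ alone --- only the difference is. This middle range is precisely the nontrivial content of Riemann--Roch, and it is invisible to Proposition \ref{prop::acikl_alatt}, which by Theorem \ref{thm::also_korlat} can only certify distributions at the extremal threshold $\dist(x)=|E(G)|-|x|$; your ``separate, easier argument by degree considerations'' for the boundary cases does not exist in this range. (A smaller additional problem: your preliminary reduction to $-1\le f(v)\le d(v)-1$ for all $v$ is not always achievable up to linear equivalence, e.g.\ when $\deg(f)<-|V(G)|$.) A proof along orientation-theoretic lines is possible, but it requires the Baker--Norine dichotomy --- for every divisor $f$, exactly one of $f$ and $\nu_D - f$ is equivalent to an effective divisor, where $\nu_D(v)=d^-_D(v)-1$ ranges over divisors of acyclic orientations $D$ --- and that statement, not Proposition \ref{prop::acikl_alatt}, is the real engine of the theorem.
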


  Now, from Proposition \ref{prop::rank_of_dual} and Proposition 
  \ref{prop::acikl_alatt} we have for $f=K^+ - x$ that 
  $$\rank(f) = \dist(x) - 1 = |E(G)| - |x| - 1 = 
  \deg(f)- |E(G)| + |V(G)| -1,$$
  if $x$ is under an acyclic orientation.
  
  We claim that in this case, $\rank(K-f)=-1$.
  Indeed, $K-f=K-(K^+ - x)= x - \mathbf{1}$, so the dual of $K-f$ is
  $K^+ - x + \mathbf{1}$. The distribution $x$ is under an acyclic orientation, let $D$ be an orientation witnessing this, i.e., 
  $x(v) \le d^-_D(v)$ for each vertex $v \in V(G)$. 
  Then $(K^+ - x + \mathbf{1})(v) = d(v) - x(v) \ge d^+_D(v)$ 
  for each vertex $v \in V(G)$, hence we can use Lemma \ref{lem::chip_dist_over_acyclic} for $K^+ - x + \mathbf{1}$ and 
  the directed graph obtained from $D$ by reversing every edge. 
  It follows that $K^+ - x + \mathbf{1}$ is non-terminating, 
  hence for its dual, $\rank(K-f)=-1$ by Proposition \ref{prop::rank_of_dual}. 

  Therefore, we have $\rank(f) - \rank(K-f)= \deg(f)- |E(G)| + |V(G)|$, 
  showing the Riemann-Roch theorem in this special case.

\section*{Acknowledgement}
  Research was supported by the MTA-ELTE Egerv\'ary Research Group and by the
Hungarian Scientific Research Fund - OTKA, K109240 (Lilla T\'oth\-m\'e\-r\'esz), 104178 and 105645 (Viktor Kiss).

  We would like to thank B\'alint Hujter for introducing us to this topic, Erika B\'erczi-Kov\'acs and Krist\'of B\'erczi for suggesting us to try to reduce the feedback arc set problem to the computation of the rank. We would also like to thank Tam\'as Kir\'aly, Zolt\'an Kir\'aly, M\'arton Elekes and the anonymous referees for their useful comments about the manuscript.

\bibliographystyle{abbrv}
\bibliography{gon_rank}

\end{document}